\newlength{\arrow}
\newcommand{\Z}{\mathbb{Z}}
\newcommand{\bfN}{\textbf{N}_1}
\newcommand{\bfNt}{\textbf{N}_2}
\newcommand{\bfmod}{\textbf{ mod }}
\newcommand{\zs}{z_0,\dots,z_n}
\newtheorem{case}{Case}
\title{Population-Induced Phase Transitions and the Verification of Chemical Reaction Networks \thanks{This research was supported in part by National Science Foundation grants 1545028 and 1900716.}}
\author[]{James I. Lathrop}
\author[]{Jack H. Lutz}
\author[]{Robyn R. Lutz}
\author[]{Hugh D. Potter}
\author[]{Matthew R. Riley}
\affil[]{Iowa State University, USA}
\date{}
\begin{document}

\maketitle
\begin{abstract}
We show that very simple molecular systems, modeled as chemical reaction networks, can have behaviors that exhibit dramatic phase transitions at certain population thresholds.  Moreover, the magnitudes of these thresholds can thwart attempts to use simulation, model checking, or approximation by differential equations to formally verify the behaviors of such systems at realistic populations.  We show how formal theorem provers can successfully verify some such systems at populations  where other verification methods fail. 

\end{abstract}
\section{Introduction}

Chemical reaction networks, mathematical abstractions similar to Petri nets, are used as a programming language to specify the dynamic behaviors of engineered molecular systems.  Existing software can compile chemical reaction networks into DNA strand displacement systems that simulate them with growing generality and precision \cite{cSoSeWi09, jCDSPCS13, cBSJDTW17, jTTBJAC17}.  Programming is a challenging discipline in any case, but this is especially true of molecular programming, because chemical reaction networks---in addition to being Turing universal \cite{jSCWB08,oCSWB09, cFLBP17} and hence subject to all the uncomputable aspects of sequential, imperative programs--are, like the systems that they specify, distributed, asynchronous, and probabilistic.  Since many envisioned applications of molecular programming will be safety critical \cite{oWooLin05,jZhaSee11, jDoBaCh12, jLiLiYa13,jLJLZTS18, jSAGM20, oDiet20}, programmers thus seek to create chemical reaction networks that can be {\it verified} to correctly carry out their design intent.

One principle that is sometimes used in chemical reaction network design is the {\it small population heuristic} \cite{jLPCKP12, cCaKwWh16, jEKLLLM19}.  The idea here is to verify various stages of a design by model checking or software simulation to ferret out bugs in the design prior to laboratory experimentation or deployment.  Since the number of states of a molecular system is typically much larger than its population (the number of molecules present), and since molecular systems typically have very large populations, this model checking or simulation can usually only be carried out on populations that are far smaller than those of the intended molecular systems.  It is nevertheless reasonable to hope that, if a system is going to consist of a very large number of ``devices'' of various sorts, then any unforeseen errors in these devices' interactions will manifest themselves even with very small populations of each device.  It is this reasonable hope that is the underlying premise of the small population heuristic.  (Note that the small population heuristic can be regarded as a molecular version of the small scope hypothesis \cite{jJack19}.)

The question that we address here is whether real molecular systems can thwart the small population heuristic.  That is, can a real molecular system behave very differently at large populations than at small populations?  If so, {\it how sensitive} can its behavior be to its population, and {\it how simple a mechanism} can achieve such sensitivity?

In order to ensure that we are only investigating population effects, we focus our attention on chemical reaction networks that are {\it population protocols} in the sense that their populations remain constant throughout their operations.  If we have such a chemical reaction network, and if we vary its initial population {\it and nothing else}, then we are assured that any resulting variations of behavior are due solely to the differing populations.

In this paper we show that very simple chemical reaction networks can be very sensitive to their own populations.  In fact, they can exhibit {\it population-induced phase transitions}, behaving one way below a threshold population and behaving very differently above that threshold.  After reviewing chemical reaction networks in Section~\ref{sec:CRN}, we present in Section~\ref{sec:SinglePhaseTransition} a chemical reaction network $\mathbf{N}_1$, and we prove that $\mathbf{N}_1$ exhibits a population-induced phase transition in the following sense.  There are two parameters, $m$ and $n$, in the construction.  For this discussion, we may take $m = 34$ and $n = 67$, but the construction is general.  There are $n+1$ reactions among $n+2$ species (molecule types) in $\mathbf{N}_1$.  A species $Z_0$ is given an initial population of $p$, and all other species counts are initially 0.  Each reaction of $\mathbf{N}_1$ has two reactants and two products, so the total population of $\mathbf{N}_1$ is $p$ at all times.  There are in $\mathbf{N}_1$ two distinguished species, $B$ and $R$.  These ``blue'' and ``red'' species are abstract stand-ins for two different behaviors of $\mathbf{N}_1$.  Our construction exploits the inherent nonlinearity of chemical kinetics to ensure that, if $p < 2^m$, then $\mathbf{N}_1$ terminates with essentially all its population blue, while if $p\geq 2^m$, then $\mathbf{N}_1$ terminates with essentially all its population red.  Thus $\mathbf{N}_1$ exhibits a sharp phase transition at the population threshold $p = 2^m$.

Our construction is very simple.  The chemical reaction network $\mathbf{N}_1$ changes its behavior at the threshold $p = 2^m$ by merely computing successive bits of $p$, starting at the least significant bit.  This mechanism is so simple that it could be hidden, by accident or by malice, in a larger chemical reaction network.  Moreover, for suitable values of $m$ (e.g., $m = 34$, so that the threshold $p = 2^m$ is roughly $1.7 \times 10^{10}$),
 \begin{enumerate}
 	\item any attempt to model-check or simulate $\mathbf{N}_1$ will perforce use a population much less than the threshold and conclude that $\mathbf{N}_1$ will always turn blue; while
	\item any realistic wet-lab molecular implementation of $\mathbf{N}_1$ will have a population greater than the threshold and thus turn red.
\end{enumerate}
If the behaviors represented by blue and red here are a desired, ``good'' behavior of $\mathbf{N}_1$ (or of a network containing $\mathbf{N}_1$) and an undesired, ``bad'' behavior of this network, respectively, then the possibility of such a phase transition is a serious challenge to verifying the correct behavior of the chemical reaction network.  Simply put, this is a context in which the small population heuristic can lead us astray.

\begin{figure}[htb]
	\begin{tikzpicture}[scale=0.78]
\node at (7,0.5) {\footnotesize population};
\draw[<-] (0.25,0.5) -- (6,0.5);
\draw[->] (8,0.5) -- (13.75,0.5);
\node at (0,0.5) {\footnotesize $0$};
\node at (14,0.5) {\footnotesize $\infty$};

\draw[thick,blue] (0,0) -- (5,0);
\draw[very thick,red] (5,0) -- (10,0);
\draw[thick,blue] (10,0) -- (14,0);

\draw[thick,blue] (0,-0.1) -- (0,0.1);
\draw[thick,blue] (2,-0.1) -- (2,0.1);
\draw[thick,blue] (4,-0.1) -- (4,0.1);
\draw[very thick,red] (5,-0.2) -- (5,0.2);
\draw[very thick,red] (10,-0.2) -- (10,0.2);
\draw[thick,blue] (11,-0.1) -- (11,0.1);
\draw[thick,blue] (14,-0.1) -- (14,0.1);

\draw[decorate,decoration={brace,amplitude=10pt}] (9,-0.1) -- (6,-0.1) node [midway,yshift=-0.7cm,align=center] {\footnotesize realistic nano-experiments\\[-0.1cm]\footnotesize and applications};
\draw[decorate,decoration={brace,amplitude=10pt}] (2,-0.2) -- (0,-0.2) node [midway,yshift=-0.55cm] {\footnotesize model checking works};
\draw[decorate,decoration={brace,amplitude=10pt}] (4,-1.2) -- (0,-1.2) node [midway,yshift=-0.55cm] {\footnotesize simulation works};
\draw[decorate,decoration={brace,amplitude=10pt}] (14,-0.2) -- (11,-0.2) node [black,midway,yshift=-0.55cm] {\footnotesize ODEs work};
\end{tikzpicture}
	\caption{Scales at which different verification methods (simulation, model checking, and ODE's) work.  The gap in the middle shows the scale at which none of these methods will catch the ``produce blue'' behavior of the system design.  This gap is problematic because it is the scale of realistic programmed molecular systems. We show in Section~\ref{sec:TheoremProving} how such systems can be verified using automated theorem proving. \label{fig:introfigure}}
\end{figure}

There is a dual {\it large population heuristic} that is used even more often than the small population heuristic.  A theorem of Kurtz \cite{jKurt72, oAndKur11, oAndKur15} tells us that each {\it stochastic} chemical reaction network (the type of chemical reaction network used in our work here and in most of molecular programming) behaves, at sufficiently large populations, like the corresponding {\it deterministic} chemical reaction network.  Since the behavior of a deterministic chemical reaction network is exactly described by a system of ordinary differential equations, this means that we can use a mathematical software package to numerically solve this system and thereby understand the behavior of the original stochastic chemical reaction network at sufficiently large populations.

In Section~\ref{sec:CoupledPhaseTransitions} we add a single reaction to the chemical reaction network $\mathbf{N}_1$, creating a chemical reaction network $\mathbf{N}_2$ that we prove (in Theorem~\ref{thm:pleq2pownterm}) to exhibit two {\it coupled population-induced phase transitions} in the following sense.  If $p < 2^m$ or $p \geq 2^n$, then $\mathbf{N}_2$ terminates with essentially all its population blue, while if $2^m \leq p < 2^n$, then $\mathbf{N}_2$ terminates with essentially all its population red.  Thus $\mathbf{N}_2$ exhibits sharp phase transitions at the two population thresholds, $p = 2^m$ and $p = 2^n$.  These phase transitions are {\it coupled} in that exceeding the second threshold returns the behavior of $\mathbf{N}_2$ to its behavior below the first threshold.  For suitable values of $m$ and $n$ (e.g. $m = 34$ and $n = 67$ as above, so that the thresholds $p = 2^m$ and $p = 2^n$ are roughly $1.7 \times 10^{10}$ and $1.5 \times 10^{20}$), this implies (see Figure~\ref{fig:introfigure}) that
\begin{enumerate}
	\item any attempt to model-check or simulate $\mathbf{N}_2$ will perforce use a population much less than the smaller threshold and conclude that $\mathbf{N}_2$ will always turn blue, and
 	\item any use of ordinary differential equations to analyze the behavior of $\mathbf{N}_2$ will also conclude that $\mathbf{N}_2$ will always turn blue, a conclusion that is only valid for populations greater than the larger threshold, while
	\item any realistic wet-lab molecular implementation of $\mathbf{N}_2$ will have a population between the two thresholds and thus turn red.
\end{enumerate}
The chemical reaction network $\mathbf{N}_2$ thus exemplifies a class of contexts in which the small population heuristic and the large population heuristic can both lead us astray.

We emphasize that the phase transitions in the chemical reaction networks $\mathbf{N}_1$ and $\mathbf{N}_2$ occur at thresholds in their {\it absolute populations}.  In contrast, phase transitions in chemical reaction networks for approximate majority \cite{jAnAsEi08a, jCarCsi12, cCHKM17} occur at threshold {\it ratios between sub-populations}, and phase transitions in bacterial quorum sensing \cite{jMilBas01} occur at threshold {\it population densities}.

Section~\ref{sec:Implications} discusses the consequences of our results for the verification of programmed molecular systems in some detail.  Here we summarize these consequences briefly.  Phase transitions are ubiquitous in natural and engineered systems \cite{cRanKap10, cRand17, jCaMiRa18}.  Our results are thus cautionary, but they should not be daunting.  Fifteen years after Turing proved the undecidability of the halting problem, Rice \cite{oRice51, jRice53} proved his famous generalization stating that {\it every} nontrivial input/output property of programs is undecidable.  Rice's theorem saves valuable time, but it has never prevented computer scientists from developing specific programs in disciplined ways that enable them to be verified.  Similarly, Sections \ref{sec:SinglePhaseTransition} and \ref{sec:CoupledPhaseTransitions} give mathematical {\it proofs} that the chemical reaction networks $\mathbf{N}_1$ and $\mathbf{N}_2$ have the properties described above, and Section \ref{sec:Implications} describes how we have implemented such proofs in the Isabelle proof assistant \cite{oNiPaWe02, jPaNiWe19}.  As molecular programming develops, simulators, model checkers, theorem provers, and other tools will evolve with it, as will disciplined scientific judgment about how and when to use such tools.

\section{Chemical Reaction Networks \label{sec:CRN}}

Chemical reaction networks (CRNs) are abstract models of molecular processes in well-mixed solutions.  They are roughly equivalent to three models used in distributed computing, namely, Petri nets, population protocols, and vector addition systems \cite{oCSWB09}.  This paper uses stochastic chemical reaction networks.

For our purposes, a ({\it stochastic}) {\it chemical reaction network} {\bf N} consists of finitely many {\it reactions}, each of which has the form
\begin{equation}
\label{eqn:abcd}
A + B \rightarrow C + D,
\end{equation}
where $A$, $B$, $C$, and $D$ (not necessarily distinct) are {\it species}, i.e., abstract types of molecules.  Intuitively, if this reaction occurs in a solution at some time, then one $A$ and one $B$ disappear from the solution and are replaced by one $C$ and one $D$, these things happening instantaneously.  A {\it state} of the chemical reaction network {\bf N} with species $A_1,\dots,A_n$ at a particular moment of time is the vector $(a_1,\dots,a_s)$, where each $a_i$ is the nonnegative integer count of the molecules of species $A_i$ in solution at that moment.  Note that we are using the so called ``lower-case convention'' for denoting species counts.

In the full stochastic chemical reaction network model, each reaction also has a positive real {\it rate constant}, and the random behavior of {\bf N} obeys a continuous-time Markov chain derived from these rate constants.  However, our results here are so robust that they hold for {\it any} assignment of rate constants, so we need not concern ourselves with rate constants or continuous-time Markov chains.  In fact, for this paper, we can consider the reaction (\ref{eqn:abcd}) to be the if-statement
\begin{equation}
\label{eqn:rxndelta}
\text{if }a>0\text{ and }b>0\text{ then }a, b, c, d := a-1, b-1, c+1, d+1,
\end{equation}
where ``:='' is parallel assignment.   The reaction (\ref{eqn:abcd}) is \textit{enabled} in a state $q$ of $\mathbf{N}$ if $a>0$ and $b>0$ in $q$; otherwise, this reaction is \textit{disabled} in $q$.  A state $q$ of $\mathbf{N}$ is \textit{terminal} if no reaction is enabled in $q$.

A \textit{trajectory} of a chemical reaction network $\mathbf{N}$ is a sequence $\tau = (q_i\ |\ 0 \le i < \ell)$ of states of $\mathbf{N}$, where $\ell \in \mathbb{Z}^+ \cup \{\infty\}$ is the \textit{length} of $\tau$ and, for each $i \in \mathbb{N}$ with $i + 1 < \ell$, there is a reaction of $\mathbf{N}$ that is enabled in $q_i$ and whose effect, as defined by (\ref{eqn:rxndelta}), is to change the state of $\mathbf{N}$ from $q_i$ to $q_{i+1}$.  A trajectory $\tau = (q_i\ |\ 0 \le i < \ell)$ is \textit{terminal} if $\ell < \infty$ and $q_{\ell - 1}$ is a terminal state of $\mathbf{N}$.

Assume for this paragraph that the context specifies an initial state $q_0$ of $\mathbf{N}$, as it does in this paper.  A state $q$ of $\mathbf{N}$ is \textit{accessible} if there is a finite trajectory $\tau = (q_i\ |\ 0 \le i < \ell)$ of $\mathbf{N}$ with $q_{\ell-1} = q$.  A \textit{full trajectory} of $\mathbf{N}$ is a trajectory $\tau = (q_0\ |\ 0 \le i < \ell)$ that is either terminal or infinite.

The fact that each reaction (\ref{eqn:abcd}) has two \textit{reactants} ($A$ and $B$) and two \textit{products} ($C$ and $D$) means that $\mathbf{N}$ is a \textit{population protocol} \cite{jAAER07}.  This condition implies that the total population of all species never changes in the course of a trajectory.  If such a chemical reaction network has $s$ species and initial population $p$, its state space is thus the $(s-1)$-dimensional integer simplex
\begin{equation}
    \Delta^{s-1}(p) = \left\{ (a_1,\dots,a_s) \in \mathbb{N}^s\ \middle|\ \sum_{i=1}^s a_i = p\right\}.
\end{equation}
Note that $|\Delta^{s-1}(p)| = \binom{p+s-1}{s-1}$.  Of course, fewer than this many states may be reachable from a particular initial state of $\mathbf{N}$.

A full trajectory $\tau = (q_i\ |\ 0 \le i < \ell)$ of a CRN $\mathbf{N}$ is (\textit{strongly}) \textit{fair} \cite{jKwia89, oBaiKat08} if it has the property that, for every state $q$ and reaction $\rho$ that is enabled in $q$,
\begin{equation}
\label{eqn:fairness}
    (\exists^\infty i) q_i = q \implies (\exists^\infty j) [q_j = q\ \textrm{and $\rho$ occurs at $j$ in $\tau$}],
\end{equation}
where $(\exists^\infty i)$ means ``there exist infinitely many $i$ such that.''  Note that every terminal trajectory of $\mathbf{N}$ is vacuously fair, because it does not satisfy the hypothesis of (\ref{eqn:fairness}).

The stochastic kinetics of chemical reaction networks implies that, regardless of the rate constants of the reactions, for every population protocol $\mathbf{N}$ and every initial population $p$ of $\mathbf{N}$, there is a real number $\varepsilon>0$ such that, for every state $q$ of $\mathbf{N}$ and reaction $\rho$ that is enabled in $q$, the probability that $\rho$ occurs in $q$ depends only on $q$ and is at least $\varepsilon$.  This in turn implies that, with probability 1, $\mathbf{N}$ follows a fair trajectory.  Hence, if $\mathbf{N}$ has a given behavior on all fair trajectories, then $\mathbf{N}$ has that behavior with probability 1.

We use the following two facts in Section 4.  The first is an obvious consequence of the definition of fairness.
\begin{observation}
\label{obs:reactionfairness}
If $\tau = (q_i\ |\ 0 \le i < \ell)$ is a fair trajectory of a population protocol $\mathbf{N}$, then, for every reaction $\rho$ of $\mathbf{N}$,
\begin{equation}
(\exists^\infty i)[\textrm{$\rho$ is enabled in $q_i$}] \implies (\exists^\infty j)[\textrm{$\rho$ occurs at $j$ in $\tau$}].
\end{equation}
\end{observation}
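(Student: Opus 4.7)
The plan is to reduce the observation to the definition of fairness (equation (2.4)) by a pigeonhole argument, using the fact that a population protocol with a fixed initial population has a finite state space. Fix a reaction $\rho$ of $\mathbf{N}$ and assume the antecedent, namely that the set $I = \{i \mid \rho \text{ is enabled in } q_i\}$ is infinite. The goal is to exhibit a single state $q^*$ that occurs infinitely often along $\tau$ and in which $\rho$ is enabled, so that the fairness hypothesis (2.4) can be applied directly.

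First I would observe that, because $\mathbf{N}$ is a population protocol, every state $q_i$ lies in the simplex $\Delta^{s-1}(p)$, where $p$ is the (constant) total population along $\tau$ and $s$ is the number of species. Since $\Delta^{s-1}(p)$ is finite, and the set of states $\{q_i \mid i \in I\}$ is a subset of this finite set that is indexed by the infinite set $I$, the pigeonhole principle yields a particular state $q^*$ such that $q_i = q^*$ for infinitely many $i \in I$. By the definition of $I$, $\rho$ is enabled in $q^*$, and trivially $(\exists^\infty i)\, q_i = q^*$.

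Now apply the fairness condition (2.4) to the pair $(q^*, \rho)$: since $\rho$ is enabled in $q^*$ and $q^*$ appears infinitely often in $\tau$, fairness gives $(\exists^\infty j)[q_j = q^* \text{ and } \rho \text{ occurs at } j \text{ in } \tau]$. In particular, $\rho$ occurs at $j$ in $\tau$ for infinitely many $j$, which is the desired conclusion.

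There is essentially no obstacle here; the only point requiring any care is to notice that the finiteness of the state space (which makes pigeonhole applicable) comes for free from the population protocol hypothesis, since the two-reactant/two-product form of each reaction conserves the total population. The word \emph{obvious} in the statement of the observation is justified because the rest of the argument is just quantifier manipulation on top of (2.4).
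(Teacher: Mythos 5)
Your proof is correct, and it supplies exactly the pigeonhole-on-a-finite-state-space argument that the paper implicitly relies on when it declares the observation an ``obvious consequence'' of the definition of fairness (the paper gives no explicit proof, but uses the same finiteness-of-accessible-states fact in its appendix proof of the fair termination lemma). Note also that assuming the antecedent already forces $\ell = \infty$, so your tacit restriction to infinite trajectories is harmless.
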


A famous theorem of Harel \cite{jHare86,oKoze06} implies that the general problem of deciding whether a chemical reaction network terminates on all fair trajectories is undecidable.  Nevertheless, the following lemma gives a useful sufficient condition for termination on all fair trajectories.  This lemma undoubtedly follows from a very old result on fairness, but we do not know a proper reference at the time of this writing.  A proof appears in the Appendix.

\begin{lemma}[fair termination lemma]\label{lem:fairterm} If a population protocol with a specified initial state has a terminal trajectory from every accessible state, then all its fair trajectories are terminal.
\end{lemma}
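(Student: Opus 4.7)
I would argue by contrapositive: assume some fair trajectory $\tau=(q_i \mid 0\le i<\ell)$ from the specified initial state is not terminal, and derive the existence of an accessible state from which no terminal trajectory exists. Since every finite terminal trajectory is vacuously fair, non-terminality forces $\ell=\infty$. Because the protocol conserves total population, every state along $\tau$ lies in the finite simplex $\Delta^{s-1}(p)$, where $p$ is the initial total population; an infinite sequence with finite range must revisit some state, so there is an accessible state $q$ with $q_i = q$ for infinitely many $i$. By hypothesis there is a terminal trajectory $\sigma = (q = r_0, r_1, \ldots, r_{k-1})$ starting at $q$, with $r_{k-1}$ terminal; let $\rho_{j+1}$ denote the reaction witnessing each transition $r_j \to r_{j+1}$.

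Next I would walk down $\sigma$ by induction on $j$ to show that each $r_j$ appears infinitely often in $\tau$. The base case $j=0$ is the choice of $q$. For the inductive step, given that $r_j$ appears infinitely often in $\tau$ and that $\rho_{j+1}$ is enabled in $r_j$, the definition of fairness (or, equivalently, the sharper Observation~\ref{obs:reactionfairness} applied at $r_j$) yields infinitely many indices $i$ with $q_i = r_j$ at which $\rho_{j+1}$ actually fires; each such index satisfies $q_{i+1} = r_{j+1}$, so $r_{j+1}$ too is recurrent. Iterating up to $j = k-1$ shows the terminal state $r_{k-1}$ appears infinitely often in $\tau$. But $\ell=\infty$ makes every index $i$ satisfy $i+1<\ell$, so any single occurrence $q_i = r_{k-1}$ demands a reaction enabled in $r_{k-1}$, contradicting its terminality.

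The main subtlety is the inductive step: the argument needs fairness strong enough to pin down a \emph{specific} reaction firing at a \emph{specific} recurrent state, not merely that some state is revisited. That is precisely the content of Observation~\ref{obs:reactionfairness}, and once it is invoked the rest collapses into a finite descent along $\sigma$. The other ingredients --- finiteness of the state space from population conservation, accessibility of the recurrent state $q$, and the impossibility of leaving a terminal state inside an infinite trajectory --- are routine.
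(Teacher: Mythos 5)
Your proposal is correct and is essentially the paper's own argument: both use population conservation to get a finite state space, pigeonhole to find a recurrent accessible state, the hypothesized terminal trajectory from it, and fairness to propagate along that trajectory (the paper locates the first link where recurrence fails and exhibits a fairness violation, while you push recurrence forward to the terminal state and contradict terminality --- the same step, contraposed). One small attribution quibble: the inductive step needs the full fairness condition (\ref{eqn:fairness}), which pins the reaction to the specific recurrent state $r_j$; Observation~\ref{obs:reactionfairness} alone discards the state information and would not suffice, though you correctly cite the definition as your primary justification.
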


\section{Single Phase Transition \label{sec:SinglePhaseTransition}}
\label{sec:singlethreshold}
This section presents the chemical reaction network $\mathbf{N}_1$ and proves that it exhibits a population-induced phase transition as described in the introduction.

Fix $m,n,p\in\mathbb{Z}^+$ with $n>m+1$. Let $\mathbf{N}_1$ be a chemical reaction network consisting of the $n+1$ \emph{$\zeta$-reactions}
\[\zeta_i \equiv Z_i+Z_i\to\begin{cases}Z_{i+1}+B&(0\leq i<m)\\Z_{i+1}+R&(m\leq i<n)\\Z_i+R&(i=n)\end{cases}\]
and the \emph{$\chi$-reaction}
\[\chi\equiv B+R\to R+R.\]
All results here hold regardless of the rate constants of these $n+2$ reactions.

We initialize $\mathbf{N}_1$ with $z_0=p$ and all other counts 0.

Intuitively, $B$ is \emph{blue}, $R$ is \emph{red}, and the species $Z_i$ are all \emph{colorless}.

\begin{lemma}\label{lem:1QS1}
	$\mathbf{N}_1$ terminates on all possible trajectories.
\end{lemma}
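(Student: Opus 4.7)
The plan is to exhibit a nonnegative integer-valued potential (Lyapunov) function $\Phi$ that strictly decreases at every reaction; this immediately forces every trajectory to have length at most $\Phi(q_0)$, so every full trajectory is finite and hence terminal, which is exactly the claim.

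I would take weights $w_{Z_i} = 2^{n-i}$, $w_B = 1$, $w_R = 0$, and set $\Phi(q) := \sum_{i=0}^{n} 2^{n-i} z_i + b$. The exponentially decaying weights on the $Z_i$ are motivated by the fact that each $\zeta_i$ for $i<n$ consumes two copies of $Z_i$ and produces a single $Z_{i+1}$, so halving the weight between consecutive indices already turns this two-for-one trade into a strict loss; assigning $B$ a small positive weight while giving $R$ weight zero then takes care of $\chi$ and of $\zeta_n$ without disturbing the $\zeta_i$ inequalities. The four routine case checks yield: $\zeta_i$ with $0 \le i < m$ loses $3 \cdot 2^{n-i-1} - 1$; $\zeta_i$ with $m \le i < n$ loses $3 \cdot 2^{n-i-1}$; $\zeta_n$ loses $w_{Z_n} = 1$; and $\chi$ loses $w_B = 1$. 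All four losses are positive integers, so every enabled reaction strictly decreases $\Phi$ by at least $1$.

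Since $\Phi$ takes values in $\mathbb{N}$ and equals $p \cdot 2^n$ in the initial state, no trajectory can contain more than $p \cdot 2^n$ reactions, and consequently every full trajectory terminates. The only real obstacle is finding the weights in the first place; once the geometric profile $w_{Z_i} = 2^{n-i}$ is in hand, the rest is mechanical. (One could instead try to invoke the fair termination lemma by exhibiting a terminal trajectory from every accessible state, but that route only yields termination on fair trajectories and would itself require a very similar potential argument, so it is cleaner to go directly through $\Phi$.)
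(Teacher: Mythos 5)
Your proof is correct and takes essentially the same approach as the paper: exhibit a nonnegative integer-valued linear ranking function that strictly decreases with every reaction, bounding the length of any trajectory by its initial value. The paper's weights are simpler---it uses the rank $3\sum_{i=0}^n z_i + 2b + r$, exploiting the fact that every $\zeta$-reaction reduces the total $Z$-count by exactly one---but the argument is the same.
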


\begin{proof}
	Every reaction in $\mathbf{N}_1$ reduces the rank
	\[3\sum_{i=0}^n z_i+2b+r.\]
\end{proof}

\begin{notation}
	For $1\leq k\leq n+1$, let
	\[S_k=\sum_{i=0}^{k-1}2^iz_i,\]
	noting that this quantity depends on the state of $\mathbf{N}_1$.
\end{notation}

\begin{lemma}\label{lem:1QS2}
	Let $0\leq j\leq n$ and $1\leq k\leq n+1$.
	\begin{enumerate}
		\item If $j\neq k-1$, then the reaction $\zeta_j$ preserves the value of $S_k$.
		\item If $j=k-1$, then the reaction $\zeta_j$ reduces the value of $S_k$.
	\end{enumerate}
\end{lemma}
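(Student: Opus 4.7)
The plan is a direct case analysis on the reaction $\zeta_j$, tracking the effect on the sum $S_k = \sum_{i=0}^{k-1} 2^i z_i$. The only subtlety is that $\zeta_n$ has the form $Z_n + Z_n \to Z_n + R$, so it decreases $z_n$ by $1$, whereas every other $\zeta_j$ with $j < n$ has the form $Z_j + Z_j \to Z_{j+1} + \cdot$, so it decreases $z_j$ by $2$ and increases $z_{j+1}$ by $1$. Accordingly, I would split the argument by whether $j < n$ or $j = n$, and within each by the sign of $j - (k-1)$.

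For part (1), assume $j \neq k - 1$. First I would handle $j < n$. If $j < k-1$, then both $z_j$ and $z_{j+1}$ are among the summands of $S_k$ (since $j+1 \le k-1$), and the change in $S_k$ is
\[
\Delta S_k = 2^j \cdot (-2) + 2^{j+1} \cdot (+1) = -2^{j+1} + 2^{j+1} = 0.
\]
If $j > k - 1$, then neither $z_j$ nor $z_{j+1}$ is a summand of $S_k$, so $\Delta S_k = 0$. Next, if $j = n$, the hypothesis $j \neq k-1$ together with $k \leq n+1$ forces $k \leq n$, so $z_n$ is not among the summands of $S_k$; hence decreasing $z_n$ by $1$ leaves $S_k$ unchanged. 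This covers all subcases.

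For part (2), assume $j = k - 1$. If $j < n$, then $z_{j+1} = z_k$ is \emph{not} a summand of $S_k$ (the sum stops at index $k-1$), while $z_j$ is, so
\[
\Delta S_k = 2^j \cdot (-2) + 0 = -2^{j+1} < 0.
\]
If $j = n$, then $k = n+1$ and $z_n$ is the top summand of $S_k$, so $\Delta S_k = 2^n \cdot (-1) = -2^n < 0$. In either subcase $S_k$ strictly decreases.

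No step is a real obstacle; the whole lemma is an arithmetic bookkeeping exercise. The one thing to be careful about is not to forget the boundary reaction $\zeta_n$, which behaves differently from the other $\zeta_j$, and to note that the case $j < k-1$ with $j = n$ cannot occur because it would force $k > n+1$. Once the case split is laid out explicitly, each subcase reduces to a one-line computation of $\Delta S_k$.
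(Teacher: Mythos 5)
Your proof is correct and follows essentially the same route as the paper's: a direct case analysis computing the change in $S_k$ under $\zeta_j$, with the boundary reaction $\zeta_n$ handled separately (the paper folds it into the cases $k\leq j\leq n$ and $k=n+1$ rather than splitting on $j<n$ versus $j=n$ up front, but the arithmetic is identical). No gaps.
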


\begin{proof}
	Let $0\leq j\leq n$ and $1\leq k\leq n+1$.
	\begin{enumerate}
		\item\label{it:1QS1} This holds trivially if $k\leq j\leq n$, so assume that $0\leq j<k-1$. Let $z_0,\ldots,z_{k-1}$ be the counts of $Z_0,\ldots,Z_{k-1}$ just before an occurrence of $\zeta_j$, let $z'_0,\ldots,z'_{k-1}$ be the counts just after, and let $S'_k=\sum_{i=0}^{k-1}2^i z'_i$. Since $\zeta_j$ occurs, we must have $z_j\geq 2$. If we let $I=\{0,\ldots,k-1\}\setminus\{j,j+1\}$, then each
		\[z'_i=\begin{cases}z_i&\text{if }i\in I\\z_i-2&\text{if }i=j\\z_i+1&\text{if }i=j+1,\end{cases}\]
		so we have
		\[S'_k=\sum_{i\in I}2^i z_i+2^j(z_j-2)+2^{j+1}(z_{j+1}+1)=S_k.\]
		
		\item Assume that $j=k-1$, and use the notations $z_0,\ldots,z_{k-1}$, $z'_0,\ldots,z'_{k-1}$, and $S'_k$ as in part~\ref{it:1QS1} of this proof. If $1\leq k\leq n$, then each
		\[z'_i=\begin{cases}z_i&\text{if }0\leq i<k-1\\z_i-2&\text{if }i=k-1.\end{cases}\]
		If $k=n+1$ ,then each
		\[z'_i=\begin{cases} z_i&\text{if }0\leq i<k-1\\z_{i-1}&\text{if }i=k-1.\end{cases}\]
		Either way, $S'_k<S_k$.
	\end{enumerate}
\end{proof}

\begin{corollary}\label{cor:1QS3}
	For every $1\leq k\leq n+1$, the inequality $S_k\leq p$ is an invariant of $\mathbf{N}_1$.
\end{corollary}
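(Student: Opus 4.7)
The plan is to prove the corollary by a straightforward induction on the length of a trajectory, using Lemma~\ref{lem:1QS2} and an elementary check that the $\chi$-reaction is irrelevant to $S_k$.

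First I would establish the base case. In the initial state of $\mathbf{N}_1$ we have $z_0 = p$ and $z_i = 0$ for every $i \geq 1$, so for any $1 \leq k \leq n+1$,
\[
S_k = \sum_{i=0}^{k-1} 2^i z_i = 2^0 \cdot p = p,
\]
which satisfies $S_k \leq p$.

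Next I would argue that no reaction of $\mathbf{N}_1$ ever increases $S_k$. The reactions of $\mathbf{N}_1$ split into the $\zeta_j$'s (for $0 \leq j \leq n$) and the single reaction $\chi$. For the $\zeta_j$'s, Lemma~\ref{lem:1QS2} already states the required fact: $\zeta_j$ preserves $S_k$ when $j \neq k-1$ and reduces $S_k$ when $j = k-1$. For the $\chi$-reaction $B + R \to R + R$, no $Z_i$ appears as reactant or product, so none of the counts $z_0,\ldots,z_{k-1}$ change; hence $S_k$ is preserved. Therefore every step of any trajectory leaves $S_k$ unchanged or strictly decreases it.

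A routine induction on the trajectory length now yields $S_k \leq p$ at every accessible state, which is precisely the invariant claimed. I do not expect a genuine obstacle here; the only thing to be a bit careful about is explicitly noting that $\chi$ preserves $S_k$, since Lemma~\ref{lem:1QS2} only addresses the $\zeta$-reactions.
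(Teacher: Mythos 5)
Your proposal is correct and follows essentially the same route as the paper's proof: verify $S_k = p$ in the initial state, note that the $\chi$-reaction leaves all $z_i$ (and hence $S_k$) unchanged, and invoke Lemma~\ref{lem:1QS2} to conclude that every $\zeta$-reaction preserves or decreases $S_k$, so the inequality propagates along any trajectory. The paper's proof is just a terser statement of the same argument.
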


\begin{proof}
	Let $1\leq k\leq n+1$. We have $S_k=p$ in the initial state that we have specified. The $\chi$-reaction trivially preserves the value of $S_k$ and Lemma~\ref{lem:1QS2} tells us that the reactions $\zeta_0,\ldots,\zeta_n$ all preserve the condition $S_k\leq p$.
\end{proof}

\begin{corollary}\label{cor:1QS4}
	If $1\leq k\leq n$ and $z_k>0$ in some accessible state of $\mathbf{N}_1$, then $p\geq 2^k$.
\end{corollary}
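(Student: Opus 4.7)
The plan is to apply Corollary~\ref{cor:1QS3} at the index $k+1$, rather than at $k$ itself, and extract a lower bound on $S_{k+1}$ from the single assumption $z_k>0$. Since the hypothesis $1\leq k\leq n$ gives $2\leq k+1\leq n+1$, Corollary~\ref{cor:1QS3} applies and yields $S_{k+1}\leq p$ in the accessible state at hand.

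Next I would bound $S_{k+1}$ from below. Expanding the definition gives $S_{k+1}=\sum_{i=0}^{k}2^i z_i$, and since every $z_i$ is a nonnegative integer the whole sum is at least its top term $2^k z_k$. Because $z_k$ is a nonnegative integer satisfying $z_k>0$, we in fact have $z_k\geq 1$, so $S_{k+1}\geq 2^k$.

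Chaining the two bounds then gives $p\geq S_{k+1}\geq 2^k$, which is exactly the desired conclusion.

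There is essentially no obstacle here; the only point requiring attention is recognizing that the right instance of Corollary~\ref{cor:1QS3} to invoke is the one at $k+1$ (not at $k$), so that the invariant sum already contains the term $2^k z_k$ that carries the information supplied by the hypothesis. An alternative route---tracing backward along a trajectory leading to the accessible state, locating the first firing of $\zeta_{k-1}$ (the unique reaction that can produce $Z_k$ for $1\leq k\leq n$), and combining the enabling condition $z_{k-1}\geq 2$ with the invariant $S_k\leq p$---also works, but the direct application of Corollary~\ref{cor:1QS3} at $k+1$ avoids any reasoning about trajectories and is a one-line calculation.
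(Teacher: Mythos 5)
Your proposal is correct and matches the paper's proof exactly: the paper also applies Corollary~\ref{cor:1QS3} at index $k+1$ and concludes $p\geq S_{k+1}\geq 2^k z_k\geq 2^k$. Nothing further is needed.
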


\begin{proof}
	Assume the hypothesis. Then Corollary~\ref{cor:1QS3} tells us that, in the given accessible state,
	\[p\geq S_{k+1}\geq 2^k z_k\geq 2^k.\]
\end{proof}

In the following, for $d\in\mathbb{Z}^+$, we use both the mod-$d$ \emph{congruence} (equivalence relation)
\[a\equiv b\bmod d,\]
which asserts of integers $a,b\in\mathbf{Z}$ that $b-a$ is divisible by $d$, and the \emph{$\mathbf{mod}$-$d$ operation}
\[b\;\mathbf{mod}\;d\]
whose value, for $b\in\mathbb{Z}$, is the unique $r\in\mathbb{Z}$ such that $0\leq r<d$ and $r\equiv b\bmod d$.

\begin{corollary}\label{cor:1QS5}
	The congruence
	\begin{equation}\label{eq:1QSA}
		S_n\equiv p\bmod 2^n
	\end{equation}
	is an invariant of $\mathbf{N}_1$.
\end{corollary}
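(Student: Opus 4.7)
The plan is a routine induction on the length of an accessible trajectory, checking that each reaction preserves the congruence $S_n \equiv p \bmod 2^n$.

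For the base case, I would verify the congruence in the initial state: since $z_0 = p$ and $z_i = 0$ for $1 \le i \le n-1$, we have $S_n = \sum_{i=0}^{n-1} 2^i z_i = p$, so trivially $S_n \equiv p \bmod 2^n$.

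For the inductive step, I would go through each of the $n+2$ reactions. The $\chi$-reaction modifies only $b$ and $r$, neither of which appears in $S_n$, so $S_n$ is preserved exactly. For the $\zeta$-reactions, I split into cases by the index $j$. If $0 \le j \le n-2$, then $j \ne n-1$, so Lemma~\ref{lem:1QS2}(\ref{it:1QS1}) (applied with $k = n$) tells us that $\zeta_j$ preserves $S_n$ exactly, and hence preserves the congruence. The reaction $\zeta_n \equiv Z_n + Z_n \to Z_n + R$ alters only $z_n$ (decrementing it by $1$) and $r$, neither of which appears in $S_n$, so $S_n$ is again preserved exactly.

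The one case that actually exploits the ``mod $2^n$'' in the statement is $\zeta_{n-1} \equiv Z_{n-1}+Z_{n-1} \to Z_n + R$. Here $z_{n-1}$ decreases by $2$ and $z_n$ increases by $1$, but $z_n$ contributes nothing to $S_n$ (whose top index is $n-1$), so $S_n$ changes by exactly $-2 \cdot 2^{n-1} = -2^n$. This change is a multiple of $2^n$, and hence the congruence $S_n \equiv p \bmod 2^n$ is preserved. I don't foresee any obstacle: the whole argument is a one-line check per reaction, and the essential arithmetic identity is that the lone ``carry'' reaction $\zeta_{n-1}$ alters $S_n$ by precisely the modulus $2^n$.
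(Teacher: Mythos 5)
Your proof is correct and follows essentially the same route as the paper: verify the congruence in the initial state, use Lemma~\ref{lem:1QS2} (with $k=n$) plus direct inspection to see that $\chi$, $\zeta_n$, and $\zeta_j$ for $j\le n-2$ preserve $S_n$ exactly, and then observe that the lone carry reaction $\zeta_{n-1}$ changes $S_n$ by exactly $-2^n$, preserving the congruence. The only cosmetic difference is that you phrase the invariance argument explicitly as an induction on trajectory length, which is what the paper's ``invariant'' framing amounts to anyway.
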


\begin{proof}
	The initialization of $\mathbf{N}_1$ ensures that (\ref{eq:1QSA}) holds in the initial state. It is clear that the reactions $\zeta_n$ and $\chi$ preserve the value of $S_n$, and Lemma~\ref{lem:1QS2} tells us that the reactions $\zeta_1,\ldots,\zeta_{n-2}$ preserve the value of $S_n$. Hence it suffices to show that the reaction $\zeta_{n-1}$ preserves the truth of (\ref{eq:1QSA}).

	Assume that (\ref{eq:1QSA}) holds just prior to the occurrence of $\zeta_{n-1}$. Let $z_0,\ldots,z_{n-1}$ be the counts of $Z_0,\ldots,Z_{n-1}$ just before this occurrence, let $z'_0,\ldots,z'_{n-1}$ be the counts just after this occurrence, and let $S'_n=\sum_{i=0}^{n-1}2^i z'_i$. Since $\zeta_{n-1}$ occurs, we must have $z_{n-1}\geq 2$. For each $0\leq i<n$ we have
	\[z'_n=\begin{cases}z_i&\text{if }0\leq i<n-1\\z_i-2&\text{if }i=n-1,\end{cases}\]
	so
	\begin{align*}
		S'_n&=\sum_{i=0}^{n-2}2^i z_i +2^{n-1}(z_{n-1}-2)\\
		&=S_n-2^n\\
		&\equiv S_n\bmod 2^n.
	\end{align*}
	Since $S_n\equiv p\bmod 2^n$, it follows that $S'_n\equiv p\bmod 2^n$, i.e., that (\ref{eq:1QSA}) holds just after this occurrence of $\zeta_{n-1}$.
\end{proof}

\begin{corollary}\label{cor:1QS6}
	For every $1\leq k\leq n$, the condition
	\[\Theta_k\equiv z_k=\cdots=z_n=0\implies S_k=p\]
	is an invariant of $\mathbf{N}_1$.
\end{corollary}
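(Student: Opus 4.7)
The plan is to prove that $\Theta_k$ is an invariant by the standard induction on reaction steps: verify that $\Theta_k$ holds in the specified initial state, and then show that each of the reactions $\chi, \zeta_0, \ldots, \zeta_n$ preserves $\Theta_k$. In the initial state, $z_k = \cdots = z_n = 0$ (since $k \geq 1$) and $S_k = 2^0 z_0 = p$, so $\Theta_k$ holds with both sides of the implication true.

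For the inductive step I would do a case analysis on which reaction fires. The $\chi$-reaction changes neither any $z_i$ nor the value of $S_k$, so it trivially preserves $\Theta_k$. For a $\zeta_j$ reaction, the natural split is by the position of $j$ relative to $k$. In the range $0 \leq j \leq k-2$, part~1 of Lemma~\ref{lem:1QS2} says $\zeta_j$ preserves $S_k$; moreover such a $\zeta_j$ only modifies $z_j$ and $z_{j+1}$, and $j+1 \leq k-1 < k$, so the counts $z_k, \ldots, z_n$ are all unchanged. Hence both the hypothesis and the conclusion of $\Theta_k$ are preserved.

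In the remaining cases $k-1 \leq j \leq n$, I plan to show that after $\zeta_j$ fires, at least one of $z_k, \ldots, z_n$ is strictly positive, so the hypothesis of $\Theta_k$ is false and the implication is vacuously true. Concretely: if $k-1 \leq j \leq n-1$, then $\zeta_j$ increments $z_{j+1}$ by $1$, and since $k \leq j+1 \leq n$, this places a positive value among $z_k, \ldots, z_n$. If $j = n$, then $\zeta_n$ required $z_n \geq 2$ to be enabled and only decreases $z_n$ by $1$, leaving $z_n \geq 1$. Either way the hypothesis of $\Theta_k$ fails after the reaction.

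There is no real obstacle here; the only thing to be careful about is the bookkeeping in the last case, in particular correctly handling the boundary $j = k-1$ (where $S_k$ does decrease by $2^k$, per part~2 of Lemma~\ref{lem:1QS2}, but the conclusion $S_k = p$ is no longer needed because the hypothesis has become false) and the special form of $\zeta_n$ (which produces $Z_n$ rather than $Z_{n+1}$, so it is $z_n$ itself that witnesses the failed hypothesis). Together these cases cover every reaction, completing the induction.
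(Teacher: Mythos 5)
Your proof is correct and takes essentially the same approach as the paper: both arguments check the initial state, dispose of $\chi$ and the high-index $\zeta$-reactions by noting the hypothesis of $\Theta_k$ becomes or remains false, and invoke part~1 of Lemma~\ref{lem:1QS2} to preserve $S_k$ when $j< k-1$. The only difference is organizational --- you split cases on $j$ up front, whereas the paper assumes $z'_k=\cdots=z'_n=0$ after the reaction and deduces $j+1<k$ --- which is just the contrapositive packaging of the same argument.
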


\begin{proof}
	Let $1\leq k\leq n$. The condition $\Theta_k$ holds trivially in the initial state that we have specified. The reactions $\zeta_n$ and $\chi$ trivially preserve the value of $S_k$, so let $0\leq j<n$. It suffices to show that $\zeta_j$ preserves the condition $\Theta_k$. For this, assume that $\Theta_k$ holds just prior to an occurrence of $\zeta_j$. Let $z_0,\ldots,z_n$ be the counts of $Z_0,\ldots,Z_n$ just prior to this occurrence of $\zeta_j$, and let $z'_0,\ldots,z'_n$ be the counts just after this occurrence. Since $\zeta_j$ occurs, we must have $z_j\geq 2$ and $z'_{j+1}\geq 1$. To see that $\Theta_k$ holds just after this occurrence of $\zeta_j$, assume that $z'_k=\cdots=z'_n=0$. Then $z_k=\cdots=z_n=0$ and $j+1<k$, so Lemma~\ref{lem:1QS2} tells us that $\zeta_j$ preserves the value of $S_k$. Hence $\Theta_k$ holds just after this occurrence of $\zeta_j$.
\end{proof}

\begin{corollary}\label{cor:1QS7}
	Let $(q_0,\ldots,q_t)$ be a trajectory of $\mathbf{N}_1$, where $q_t$ is a terminal state, and let $1\leq k\leq n$. If $p\geq 2^k$, then there exists $1\leq s\leq t$ such that $z_k>0$ in $q_s$.
\end{corollary}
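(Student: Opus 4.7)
The plan is to argue by contradiction: assume that $z_k = 0$ in every state $q_s$ of the trajectory (including the terminal $q_t$), and derive $p < 2^k$, contradicting the hypothesis.

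First I would use the assumption that $z_k$ never becomes positive to propagate zeros upward through the $Z_i$'s. The species $Z_k$ is only produced by the reaction $\zeta_{k-1}$, so if $z_k$ remains $0$ throughout, then $\zeta_{k-1}$ never fires. Next, since $Z_{k+1}$ is only produced by $\zeta_k$ and $\zeta_k$ requires $z_k \geq 2$, the reaction $\zeta_k$ is disabled in every state, so $z_{k+1}$ is never increased from its initial value $0$. Iterating this argument (a straightforward induction on $j$), we conclude that $z_k = z_{k+1} = \cdots = z_n = 0$ holds in every state of the trajectory, and in particular in $q_t$.

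Now I would combine this with the existing invariants. Applying Corollary~\ref{cor:1QS6} in the terminal state $q_t$ gives $S_k = p$ in $q_t$. On the other hand, because $q_t$ is terminal, every $\zeta_i$ is disabled, which forces $z_i \leq 1$ for each $0 \leq i < k$. Therefore
\[
p \;=\; S_k \;=\; \sum_{i=0}^{k-1} 2^i z_i \;\leq\; \sum_{i=0}^{k-1} 2^i \;=\; 2^k - 1 \;<\; 2^k,
\]
which contradicts $p \geq 2^k$ and completes the proof.

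The argument is essentially a bookkeeping one, and the main step that needs care is the zero-propagation induction — one must confirm that no reaction other than $\zeta_{j-1}$ can introduce a fresh $Z_j$ (in particular that $\zeta_n$ only recycles $Z_n$ and so does not threaten the induction), so that $z_k \equiv 0$ really does cascade into $z_{k+1} \equiv 0$, $z_{k+2} \equiv 0$, and so on. Once that is in place, Corollary~\ref{cor:1QS6} and the terminal-state bound $z_i \leq 1$ yield the numerical contradiction immediately.
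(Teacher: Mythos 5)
Your proof is correct and follows essentially the same route as the paper's: both propagate $z_k=0$ upward to conclude $z_k=\cdots=z_n=0$ in $q_t$, apply Corollary~\ref{cor:1QS6} to get $S_k=p$, and then use the terminal-state bound $z_i\leq 1$ to force $S_k\leq 2^k-1<p$ (the paper phrases this as the contrapositive, showing $q_t$ cannot be terminal, while you derive a contradiction). Your explicit zero-propagation induction is a detail the paper's proof leaves implicit, but it is the same argument.
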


\begin{proof}
	Let $(q_0,\ldots,q_t)$ be a trajectory of $\mathbf{N}_1$, and let $1\leq k\leq n$. Assume that $p\geq 2^k$ and that there does not exist $s\in\{1,\ldots,t\}$ such that $z_k>0$. It suffices to show that $q_t$ is not terminal.

	Since there does not exist $s\in\{1,\ldots,t\}$ such that $z_k>0$ in $q_s$, it must be the case that $z_k=\cdots=z_n=0$ in $q_t$. It follows by Corollary~\ref{cor:1QS6} that $S_k=p$ holds in $q_t$. Since $\sum_{i=0}^{k-1}z_i=2^{k}-1$ and we have $p\geq 2^k$ by assumption, this implies that there exists $0\leq i<k$ such that $z_i>1$ in $q_t$. Hence the reaction $\zeta_i$ is enabled in $q_t$, so $q_t$ is not terminal.
\end{proof}

\begin{notation}
	For each $r\in\{0,\ldots,2^n-1\}$, let $\lambda(r)$ be the number of 1s in the $n$-bit binary representation of $r$ (leading 0s allowed), and let
	\[\varepsilon=\begin{cases}\lambda(p)&\text{if }p<2^n\\1+\lambda(p\;\mathbf{mod}\;2^n)&\text{if }p\geq 2^n.\end{cases}\]
	Note that $\varepsilon$ is an integer depending on $n$ and $p$, and that $\varepsilon$ is negligible in the sense that $\varepsilon=o(p)$ as $p\to\infty$.
\end{notation}
The boolean value of a condition $\varphi$ is $\llbracket\varphi\rrbracket=\textbf{if }\varphi\textbf{ then }1\textbf{ else }0$.

\begin{theorem}\label{thm:1QS8}
	$\mathbf{N}_1$ terminates on all trajectories in the state $(z_0,\ldots,z_n,b,r)$ specified as follows.
	\begin{enumerate}
		\item\label{it:1QSi} $z_{n-1}\cdots z_0$ is the $n$-bit binary expansion of $p\;\mathbf{mod}\;2^n$.
		\item\label{it:1QSii} $z_n=\llbracket p\geq 2^n\rrbracket$.
		\item\label{it:1QSiii} $b=(p-\varepsilon)\cdot\llbracket p< 2^m\rrbracket$
		\item\label{it:1QSiv} $r=(p-\varepsilon)\cdot\llbracket p\geq 2^m\rrbracket$.
	\end{enumerate}
\end{theorem}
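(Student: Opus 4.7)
The plan is to invoke Lemma~\ref{lem:1QS1} (so that every trajectory terminates) and then characterize the terminal state by combining the no--reaction--enabled conditions with the earlier invariants. In any terminal state, each $\zeta_i$ being disabled forces $z_i\le 1$ for every $0\le i\le n$, and $\chi$ being disabled forces $b=0$ or $r=0$. With these structural facts in hand the four items can be handled in turn.

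Items (\ref{it:1QSi}) and (\ref{it:1QSii}) fall out of the invariants. Since every $z_i\in\{0,1\}$ for $0\le i<n$ at termination, $S_n=\sum_{i=0}^{n-1}2^iz_i$ lies in $[0,2^n)$, while Corollary~\ref{cor:1QS5} gives $S_n\equiv p\bmod 2^n$, so $S_n=p\;\mathbf{mod}\;2^n$ and (\ref{it:1QSi}) follows from uniqueness of binary representations. For (\ref{it:1QSii}), if $p<2^n$ then Corollary~\ref{cor:1QS4} forces $z_n=0$ at every accessible state. If $p\ge 2^n$ then Corollary~\ref{cor:1QS7} (with $k=n$) produces some state with $z_n>0$; I would then observe that the only reaction that lowers $z_n$ is $\zeta_n$, which requires $z_n\ge 2$ and reduces it by only $1$, so once $z_n\ge 1$ it never returns to $0$. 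Together with the terminal bound $z_n\le 1$, this pins $z_n=1$.

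For items (\ref{it:1QSiii}) and (\ref{it:1QSiv}), (\ref{it:1QSi}) and (\ref{it:1QSii}) together give $\sum_{i=0}^n z_i=\varepsilon$, and since every reaction has exactly two reactants and two products the total population is conserved, yielding $b+r=p-\varepsilon$. Because $b=0$ or $r=0$ at termination, I only need to decide which vanishes. If $p<2^m$ then Corollary~\ref{cor:1QS4} gives $z_k=0$ at every accessible state for every $k\ge m$, so none of $\zeta_m,\dots,\zeta_n$ is ever enabled; then $R$ is never introduced by any $\zeta$-reaction, and since $\chi$ itself needs $r\ge 1$ to fire, a straightforward induction on trajectory length gives $r\equiv 0$, hence $b=p-\varepsilon$ in the terminal state.

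If instead $p\ge 2^m$, I would argue that at least one $\zeta_k$ with $k\ge m$ must fire during the trajectory, creating an $R$; once any $R$ has appeared the count $r$ never decreases (every reaction has non-negative net effect on $r$), so $r\ge 1$ at termination, which forces $b=0$ and $r=p-\varepsilon$. The main obstacle is this last claim. My plan is to apply Corollary~\ref{cor:1QS7} at a suitably chosen $k>m$, so that the forced appearance of $z_k>0$ at some intermediate state requires $\zeta_{k-1}$ (with $k-1\ge m$) to have fired and hence to have produced an $R$. Selecting the right $k$, and carefully handling the edge cases near $p=2^m$ where $z_m$ might be populated only by $\zeta_{m-1}$ (which produces $B$, not $R$), is the part of the argument that I expect to require the most care.
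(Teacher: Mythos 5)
Your overall strategy is the same as the paper's: invoke Lemma~\ref{lem:1QS1} for termination, read off $z_i\le 1$ and ($b=0$ or $r=0$) from the disabled reactions in a terminal state, obtain (\ref{it:1QSi}) from Corollary~\ref{cor:1QS5} together with $S_n<2^n$, obtain (\ref{it:1QSii}) from Corollaries~\ref{cor:1QS4} and~\ref{cor:1QS7} plus the monotonicity of $z_n$ once positive, and settle (\ref{it:1QSiii}) and (\ref{it:1QSiv}) by population conservation once you know which of $b,r$ vanishes. All of that is correct and matches the paper step for step.

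The edge case you flagged at the end is not a detail to be smoothed over --- it is a genuine obstruction, and it is exactly the point at which the paper's own proof errs. With the reactions as displayed, $\zeta_{m-1}$ falls in the range $0\le i<m$ and so produces $Z_m+B$, yet step~\ref{it:1QSh} of the paper's proof asserts that an occurrence of $\zeta_{m-1}$ makes $r$ positive. With the displayed reaction set, the first $R$ can only come from some $\zeta_j$ with $j\ge m$; the reaction $\zeta_m$ requires $z_m\ge 2$, which by the invariant $S_{m+1}\le p$ of Corollary~\ref{cor:1QS3} forces $p\ge 2^{m+1}$, and Corollary~\ref{cor:1QS4} likewise rules out $z_j>0$ for $j>m$ when $p<2^{m+1}$. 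Hence for $2^m\le p<2^{m+1}$ no $R$ is ever produced, $\chi$ never fires, and every trajectory terminates with $r=0$ and $b=p-\varepsilon$; concretely, for $p=2^m$ the run $\zeta_0^{2^{m-1}}\zeta_1^{2^{m-2}}\cdots\zeta_{m-1}^{1}$ ends in the terminal state with $z_m=1$, $b=2^m-1$, $r=0$, contradicting (\ref{it:1QSiii}) and (\ref{it:1QSiv}). Your plan of applying Corollary~\ref{cor:1QS7} at $k=m+1$ does correctly yield $r>0$ for all $p\ge 2^{m+1}$, and that is the best obtainable. The statement becomes true either by shifting the color boundary so that $\zeta_{m-1}$ produces $R$ (i.e., $B$ for $0\le i<m-1$ and $R$ for $m-1\le i<n$), which is evidently what the paper's step~\ref{it:1QSh} presumes, or by restating the threshold as $2^{m+1}$. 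Modulo that one correction to the setup, your argument closes completely and coincides with the paper's.
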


\begin{proof}
	Lemma~\ref{lem:1QS1} tells us that $\mathbf{N}_1$ terminates on all trajectories. Let $q=(z_0,\ldots,z_n,b,r)$ be a terminal state of $\mathbf{N}_1$, and note the following.
	\begin{enumerate}
		\item\label{it:1QSa} For all $0\leq i\leq n$, $\zeta_i$ is not enabled in $q$, so $z_i\in\{0,1\}$.
		\item\label{it:1QSb} $\chi$ is not enabled in $q$, so $b=0$ or $r=0$.
		\item\label{it:1QSc} By (\ref{it:1QSa}), $S_n\leq \sum_{i=0}^{n-1}2^i=2^n-1$, so Corollary~\ref{cor:1QS5} tells us that $S_n=p\;\mathbf{mod}\;2^n$, i.e., that (\ref{it:1QSi}) holds.
		\item\label{it:1QSd} If $p<2^n$, then Corollary~\ref{cor:1QS4} tells us that $z_n=0$. If $p\geq 2^n$, then Corollary~\ref{cor:1QS7} tells us that $z_n\geq 1$ somewhere along every trajectory leading to $q$. Since $z_n$ can never become 0 after becoming positive, this implies that $z_n=1$ in $q$. Hence (\ref{it:1QSii}) holds.
		\item\label{it:1QSe} By (\ref{it:1QSc}) and (\ref{it:1QSd}) we have $\sum_{i=0}^n z_i=\varepsilon$.
		\item\label{it:1QSf} Since $b+r+\sum_{i=0}^n z_i$ is an invariant of $\mathbf{N}_1$, (\ref{it:1QSb}) and (\ref{it:1QSe}) tell us that one of $b$ and $r$ is $p-\varepsilon$ and the other is 0.
		\item \label{it:1QSg} If $p<2^m$, then Corollary~\ref{cor:1QS4} tells us that $z_m=\cdots=z_n=0$ holds throughout every trajectory leading to $q$. This implies that none of the reactions $\zeta_m,\ldots,\zeta_n$ occurs along any trajectory leading to $q$, whence $r=0$.
		\item \label{it:1QSh} If $p\geq 2^m$, then Corollary~\ref{cor:1QS7} tells us that $z_m>0$ holds somewhere along every trajectory leading to $q$. This implies that the reaction $\zeta_{m-1}$ occurs, whence $r$ becomes positive, somewhere along every trajectory leading to $q$. Since $r$ can never become 0 after becoming positive, this implies that $r>0$.
		\item By (\ref{it:1QSf}), (\ref{it:1QSg}), and (\ref{it:1QSh}), (\ref{it:1QSiii}) and (\ref{it:1QSiv}) hold.
	\end{enumerate}
\end{proof}

Since $\varepsilon$ is negligible with respect to $p$, Theorem~\ref{thm:1QS8} says that $\mathbf{N}_1$ terminates in an overwhelmingly blue state if $p < 2^m$ and in an overwhelmingly red state if $p \geq 2^m$.  This is a very sharp phase transition at the population threshold $2^m$.

\section{Coupled Phase Transitions \label{sec:CoupledPhaseTransitions}}
Let $m, n, p$, and $\mathbf{N_1}$ be as in Section~\ref{sec:singlethreshold}, and let $\mathbf{N_2}$ be a CRN consisting of the $n+2$ reactions of $\mathbf{N_1}$ and the \textit{$\omega$-reaction}
\begin{equation*}
    \omega \equiv R + Z_n \rightarrow B + Z_n.
\end{equation*}
This section proves that $\mathbf{N_2}$ exhibits two coupled population-induced phase transitions as described in the introduction.

We use the same initialization for $\mathbf{N_2}$ as for $\mathbf{N_1}$.  Again, all our results hold regardless of the rate constants of the $n+3$ reactions of $\mathbf{N_2}$.

Routine inspection verifies the following.
\begin{observation}
Lemma~\ref{lem:1QS2} and Corollaries~\ref{cor:1QS3}-\ref{cor:1QS7} hold for $\mathbf{N_2}$ as well as for $\mathbf{N_1}$.
\end{observation}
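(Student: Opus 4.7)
The plan is to observe that the $\omega$-reaction $R + Z_n \to B + Z_n$ acts as a catalytic reaction with respect to every colorless species, i.e., it leaves each count $z_0, \ldots, z_n$ unchanged (only the counts $b$ and $r$ are affected, with $b+r$ preserved). Since Lemma~\ref{lem:1QS2} and Corollaries~\ref{cor:1QS3}--\ref{cor:1QS7} are all statements about the quantities $S_k = \sum_{i=0}^{k-1} 2^i z_i$ or about the values of individual $z_i$'s, the addition of a reaction that fixes every $z_i$ cannot possibly disturb any of them. Thus each proof carries over essentially verbatim, with $\omega$ joining $\chi$ as a reaction that acts trivially on the quantities of interest.

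Concretely, I would go through the five results in order. Lemma~\ref{lem:1QS2} is about $\zeta_j$ reactions alone, and is entirely unaffected. For Corollary~\ref{cor:1QS3}, one adds a single sentence noting that $\omega$ preserves the value of each $S_k$ (all $z_i$'s are unchanged), so the invariant $S_k \le p$ survives. Corollary~\ref{cor:1QS4} follows from Corollary~\ref{cor:1QS3} with no further change. For Corollary~\ref{cor:1QS5}, again observe that $\omega$ fixes $S_n$, so the congruence $S_n \equiv p \bmod 2^n$ is still preserved by every reaction. Corollary~\ref{cor:1QS6} similarly: $\omega$ trivially preserves both $S_k$ and the hypothesis $z_k = \cdots = z_n = 0$, so $\Theta_k$ is still invariant.

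Corollary~\ref{cor:1QS7} is the only place where the set of reactions matters in a nontrivial way, because the conclusion depends on finding some enabled reaction in the putative terminal state $q_t$. The original argument shows that if $z_k = \cdots = z_n = 0$ in $q_t$ and $p \geq 2^k$, then $S_k = p \geq 2^k$, which forces $z_i \ge 2$ for some $0 \le i < k$, hence $\zeta_i$ is enabled. This reaction $\zeta_i$ is still present in $\mathbf{N}_2$, so $q_t$ is still non-terminal, and the proof goes through unchanged.

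There is no real obstacle here; the work consists of verifying, reaction by reaction, that $\omega$ is a bystander with respect to every quantity these results constrain. The one place where a reader might pause is Corollary~\ref{cor:1QS7}, where one has to confirm that the terminality contradiction is still produced by a $\zeta$-reaction rather than relying on the absence of $\omega$. Since termination of $\mathbf{N}_2$ itself (an analogue of Lemma~\ref{lem:1QS1}) is not claimed by this observation, no additional rank argument is required.
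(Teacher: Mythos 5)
Your proposal is correct and matches the paper's intent exactly: the paper dismisses this observation with the single line ``Routine inspection verifies the following,'' and your reaction-by-reaction check that $\omega$ fixes every $z_i$ (hence every $S_k$) and that the enabling argument in Corollary~\ref{cor:1QS7} still rests on a $\zeta$-reaction is precisely that routine inspection made explicit. Nothing is missing, and your closing remark that no new termination argument is needed correctly delimits the scope of the observation.
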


If $p<2^n$, then Corollary~\ref{cor:1QS4} tells us that $z_n$ never becomes positive in $\mathbf{N_2}$, so the $\omega$-reaction never occurs in $\mathbf{N_2}$.  Thus, for $p<2^n$, $\mathbf{N_2}$ behaves exactly like $\mathbf{N_1}$.

On the other hand, if $p \ge 2^n$, then the behavior of $\mathbf{N_2}$ is very different from that of $\mathbf{N_1}$.  For example, in contrast with Lemma \ref{lem:1QS1}, we have the following.
\begin{lemma}
\label{lem:lemma12}
If $p \ge 2^n$, then not all trajectories of $\mathbf{N_2}$ terminate.
\end{lemma}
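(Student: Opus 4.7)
The plan is to construct one explicit infinite trajectory of $\mathbf{N}_2$ and thereby witness the negation of ``all trajectories terminate.'' The intuition is that when $p \geq 2^n$, the system can reach a state containing a catalytic copy of $Z_n$, at which point the $\omega$-reaction functionally undoes $\chi$: firing $\omega$ and then $\chi$ restores every species count, provided $Z_n$, $B$, and $R$ are all present. So I will first drive the system to such a state and then oscillate between $\omega$ and $\chi$ forever.

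First I would run $\mathbf{N}_2$ using only the reactions inherited from $\mathbf{N}_1$. By Lemma~\ref{lem:1QS1} this partial trajectory must terminate in some state $q$ that is terminal with respect to $\mathbf{N}_1$. Since $p \geq 2^n > 2^m$, Theorem~\ref{thm:1QS8} pins down $q$ completely: $z_n = 1$, $b = 0$, $r = p - \varepsilon$, and the remaining $z_i$ encode $p \;\mathbf{mod}\; 2^n$ in binary. Although $q$ is $\mathbf{N}_1$-terminal, it is not $\mathbf{N}_2$-terminal, because $\omega$ is enabled at $q$ (we have $z_n = 1$ and $r \geq 1$).

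Next, I would fire $\omega$ at $q$, producing a state $q'$ identical to $q$ except that $b$ has increased by $1$ and $r$ has decreased by $1$. Because $\varepsilon \leq n+1$ in the regime $p \geq 2^n$ (directly from the definition of $\varepsilon$), while $n > m+1 \geq 2$ forces $n \geq 3$ and hence $p \geq 2^n \geq 8$, we have $p - \varepsilon \geq 2$, so $r \geq 1$ still holds in $q'$. At $q'$ the reaction $\chi$ is enabled (both $b \geq 1$ and $r \geq 1$), and firing it returns the system exactly to $q$. Iterating $\omega, \chi, \omega, \chi, \ldots$ therefore extends the initial segment into an infinite trajectory of $\mathbf{N}_2$, establishing the lemma.

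There is no substantial obstacle; the only arithmetic check is the bound $p - \varepsilon \geq 2$ needed to keep $\chi$ enabled after the first $\omega$, and this is immediate from $p \geq 2^n$ and $\varepsilon \leq n+1$. The construction is robust in the sense that it uses no assumption about rate constants and only invokes reactions that are already proven available by Theorem~\ref{thm:1QS8} and the definition of the $\omega$-reaction.
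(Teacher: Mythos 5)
Your proof is correct, and its engine is the same as the paper's: exhibit an explicit infinite trajectory whose tail is the two-cycle $\omega,\chi,\omega,\chi,\ldots$, which exactly undoes itself because $\omega$ converts one $R$ to $B$ and $\chi$ converts it back. The only difference is how you reach a state from which that cycle can run. The paper constructs a concrete prefix by hand, firing $\zeta_0$ exactly $2^{n-1}$ times, then $\zeta_1$ exactly $2^{n-2}$ times, and so on, and computes the resulting counts directly ($z_n=1$, $b=2^n(1-2^{-m})$, $r=2^{n-m}-1\ge 3$); this keeps the lemma self-contained and makes the enabledness of each step transparent. You instead run an arbitrary full trajectory of $\mathbf{N}_1$ (which is also a trajectory prefix of $\mathbf{N}_2$), invoke Lemma~\ref{lem:1QS1} for its finiteness and Theorem~\ref{thm:1QS8} to pin down the resulting state with $z_n=1$, $b=0$, $r=p-\varepsilon$, and then verify the cycle is sustainable via the bound $p-\varepsilon\ge 2$, which follows from $\varepsilon\le n+1$ and $2^n\ge n+3$ for $n\ge 3$ (guaranteed by $n>m+1$ with $m\ge 1$). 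Your route leans on heavier machinery but avoids re-deriving any counts; the paper's route is more elementary and, as a bonus, starts the cycle from a state with both $b$ and $r$ large, so the arithmetic side condition you needed ($r\ge 1$ surviving the first $\omega$) is immediate there. Both are valid; your $\varepsilon$ bookkeeping is the one point a referee would check, and it holds.
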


\begin{proof}
    Assume that $p \geq 2^n$.
    Let
    \[\vec{\zeta} = \zeta_0^{2^{n-1}}\zeta_1^{2^{n-2}}\dots\zeta_{n-1}^{2^0}\]
    denote a sequence consisting of $2^{n-1}$ consecutive occurrences of the reaction $\zeta_0$, followed by $2^{n-2}$ occurrences of $\zeta_1$, etc.
    Since $p \geq 2^n$ each of these $2^n-1$ reactions is enabled when its turn comes.
    After the sequence $\vec{\zeta}$ has occurred, we have
    \begin{align*}
        z_0 &= p - 2^n,\\
        z_1 &= \dots = z_{n-1} = 0,\\
        z_n &= 1,\\
        b &= \sum_{i = 0}^{m-1} 2^{n-(i+1)} = 2^n (1-2^{-m}),\\
        r &= \sum_{i = m}^{n-1} 2^{n-(i+1)} = 2^{n-m} - 1. 
    \end{align*}
    Recalling that $n > m + 1$, we have $r \geq 3$ here, so the reaction $\omega$ is enabled after $\vec{\zeta}$ has occurred.
    In fact $\vec{\zeta}$ can (in principle) be followed by the infinite sequence
    \[\omega, \chi, \omega, \chi, \dots\]
    of reactions, so $\bfNt$ has a nonterminating trajectory.
\end{proof}

It is easy to see that the infinite trajectory of $\bfNt$ exhibited in the proof of Lemma \ref{lem:lemma12} is not fair.
In fact, we prove below that all fair paths of $\bfNt$ terminate.
First, however, we note that $\bfNt$, like $\bfN$, has a unique terminal state.

Let $\varepsilon$ be as defined before Theorem~\ref{thm:1QS8}.
\begin{lemma}
    \label{lem:lemma13}
    If $p \geq 2^n$ and $\bfNt$ terminates, then it does so in the state $(\zs,b,r)$ specified as follows.
    \begin{enumerate}[(i)]
        \item $z_{n-1}\cdots z_0$ is the $n-$bit binary expansion of $p \bfmod 2^n$.
        \item $z_n = 1$.
        \item $b = p - \varepsilon$.
        \item $r = 0$.
    \end{enumerate}
\end{lemma}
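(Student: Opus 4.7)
The plan is to mirror the proof of Theorem~\ref{thm:1QS8}, exploiting that the carried-over Observation tells us Lemma~\ref{lem:1QS2} and Corollaries~\ref{cor:1QS3}--\ref{cor:1QS7} still apply to $\mathbf{N}_2$. The only genuinely new ingredient is the $\omega$-reaction, whose non-enabledness at termination will flip the eventual color from red to blue. Fix a terminal state $q=(z_0,\ldots,z_n,b,r)$ of $\mathbf{N}_2$ reached from the specified initial state with $p\ge 2^n$.

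First I would extract the consequences of no reaction being enabled in $q$: since $\zeta_i$ is disabled for each $0\le i\le n$, we get $z_i\in\{0,1\}$ for $0\le i<n$ and $z_n\le 1$; since $\chi$ is disabled, $b=0$ or $r=0$; and since $\omega$ is disabled, $r=0$ or $z_n=0$. The bound $z_i\le 1$ for $i<n$ gives $S_n\le 2^n-1$, so Corollary~\ref{cor:1QS5} forces $S_n = p \bmod 2^n$, which is statement~(i). For statement~(ii), Corollary~\ref{cor:1QS7} (applied at $k=n$, using $p\ge 2^n$) says $z_n$ becomes positive somewhere along any trajectory leading to $q$; I would then argue that $z_n$ can never return to $0$, because the only reaction that changes $z_n$ is $\zeta_n$, which decreases $z_n$ by $1$ but requires $z_n\ge 2$ to fire (and $\omega$ leaves $z_n$ invariant). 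Combined with $z_n\le 1$, this yields $z_n=1$.

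The new step is statement~(iv). The disabling condition for $\omega$ gives $r=0$ or $z_n=0$; since $z_n=1$ by~(ii), we must have $r=0$. For statement~(iii), I would use the conservation law: every reaction of $\mathbf{N}_2$ (including $\omega$ and $\chi$) preserves $b+r+\sum_{i=0}^n z_i$, and this quantity equals $p$ initially, hence in $q$. By~(i) and~(ii), $\sum_{i=0}^n z_i = 1+\lambda(p\bmod 2^n) = \varepsilon$ (using the $p\ge 2^n$ branch of the definition of $\varepsilon$), so $b+r=p-\varepsilon$, and with $r=0$ we get $b=p-\varepsilon$.

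I do not expect any real obstacle. The only subtle point is the argument that $z_n$ cannot drop back to $0$ once it has become positive; this requires noticing that $\zeta_n$ is a ``keep one'' reaction rather than one that consumes both copies of $Z_n$, and that $\omega$ treats $Z_n$ purely as a catalyst. Everything else is routine bookkeeping on top of the invariants already established for $\mathbf{N}_1$.
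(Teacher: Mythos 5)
Your proposal is correct and follows essentially the same route as the paper's proof: (i) and (ii) are carried over from the argument for Theorem~\ref{thm:1QS8} (noting that $\omega$ does not alter $z_n$), (iv) follows from $\omega$ being disabled together with $z_n=1$, and (iii) from conservation of $b+r+\sum_{i=0}^n z_i$. One immaterial slip: $\zeta_{n-1}$ also changes $z_n$ (it increases it), but your argument only needs that no reaction can drive $z_n$ from positive back to $0$, which holds because $\zeta_n$ leaves one $Z_n$ behind and $\omega$ treats $Z_n$ as a catalyst.
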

\begin{proof}
    Let $q$ be an accessible state of $\bfNt$ that is terminal.
    The proofs of (i) and (ii) are the same as in Theorem~\ref{thm:1QS8}, together with the fact that the $\omega-$reaction does not alter the value of $z_n$.
    Since $z_n = 1$ and the $\omega-$reaction is disabled in state $q$, (iv) holds in $q$.
    Finally, since
    \[p = \sum_{i=0}^n z_i + b + r = \varepsilon + b + r,\]
    (iii) follows from (i), (ii), and (iv).
\end{proof}
\begin{lemma}
    \label{lem:fairimpliesdisable}
    On any fair trajectory of $\bfNt$, after finitely many steps, all $\zeta$-reactions are permanently disabled.
\end{lemma}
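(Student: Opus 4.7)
The plan is to find a simple monovariant that decreases strictly with every $\zeta$-reaction but is preserved by $\omega$ and $\chi$; this forces only finitely many $\zeta$-reactions on any trajectory, after which fairness will upgrade ``occurs only finitely often'' to ``eventually permanently disabled.''

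Concretely, I would take $T=\sum_{i=0}^{n}z_i$ and check the effect of each reaction on $T$. Every $\zeta_j$ with $0\le j<n$ consumes two copies of $Z_j$ and produces one $Z_{j+1}$ (plus a $B$ or $R$), so $T$ drops by $1$. The reaction $\zeta_n$ consumes two copies of $Z_n$ and returns one $Z_n$ (plus an $R$), so $T$ again drops by $1$. The reactions $\omega=R+Z_n\to B+Z_n$ and $\chi=B+R\to R+R$ leave every $z_i$ untouched, hence leave $T$ unchanged. Thus $T$ is non-increasing along any trajectory and strictly decreases at each $\zeta$-step. Since $T=p$ in the initial state and $T\ge 0$ always, at most $p$ $\zeta$-reactions can occur on any trajectory of $\bfNt$, fair or not.

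Now fix a fair trajectory $\tau=(q_i)$. By the previous bound, each individual $\zeta_j$ occurs only finitely often in $\tau$. The contrapositive of Observation~\ref{obs:reactionfairness} then yields that each $\zeta_j$ is enabled at only finitely many indices $i$. Let $N_j$ be the largest such index (taking $N_j=0$ if $\zeta_j$ is never enabled) and set $N^{*}=\max\{N_0,\ldots,N_n\}$. For every $i>N^{*}$, no $\zeta_j$ is enabled in $q_i$, which is the claimed conclusion.

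I do not expect a genuine obstacle here; the only thing that could trip one up is trying to reuse the rank function $3\sum z_i+2b+r$ from Lemma~\ref{lem:1QS1}, which fails for $\bfNt$ because $\omega$ increases $b$ and decreases $r$. Switching to the simpler $T$ sidesteps this, which is legitimate because the color-changing reactions $\omega$ and $\chi$ cannot re-enable a $\zeta$-reaction once disabled.
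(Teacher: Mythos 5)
Your proof is correct, but it takes a genuinely different route from the paper's. The paper proves Lemma~\ref{lem:fairimpliesdisable} by induction on $j$: assuming each $\zeta_i$ with $i<j$ is eventually permanently disabled, it argues that from some point on $z_j$ can never increase (since only $\zeta_{j-1}$ produces $Z_j$), so $\zeta_j$ fires only finitely often, and fairness via Observation~\ref{obs:reactionfairness} then forces $\zeta_j$ to be eventually disabled. You replace the entire induction with the single potential $T=\sum_{i=0}^{n}z_i$, which indeed drops by exactly $1$ at every $\zeta$-step (including $\zeta_n$, whose net effect on $z_n$ is $-1$) and is untouched by $\chi$ and $\omega$; this immediately bounds the total number of $\zeta$-occurrences by $p$ on \emph{every} trajectory, fair or not, after which one application of the contrapositive of Observation~\ref{obs:reactionfairness} to each of the finitely many $\zeta_j$ finishes the argument. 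Your remark that the rank $3\sum_i z_i+2b+r$ from Lemma~\ref{lem:1QS1} fails here is also accurate, since $\omega$ increases $2b+r$. What your approach buys is a shorter, non-inductive argument together with the stronger quantitative fact that at most $p$ $\zeta$-reactions ever occur; what the paper's induction buys is a more local, reaction-by-reaction analysis that does not require spotting a global conserved-up-to-decrease quantity. Both are valid proofs of the lemma.
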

\begin{proof}
    For each $0 \leq j \leq n$, let $\Phi_j$ be the assertion that, on any fair trajectory of $\bfNt$, after finitely many steps the reaction $\zeta_j$ is permanently disabled.
    It suffices to prove that $\Phi_j$ holds for all $0 \leq j \leq n$.
    We do this by induction on $j$.

    Let $0 \leq j \leq n$ and assume that $\Phi_i$ holds for all $0 \leq i < j$.
    It suffices to show that $\Phi_j$ holds.
    For this, let $\tau = (q_i | 0 \leq i \leq \infty)$ be a trajectory in which $\zeta_j$ is enabled infinitely often.
    It suffices to show that $\tau$ is not fair.
    We have two cases.
    \begin{case}
        Some $\zeta_i$ for $0 \leq i < j$ is enabled infinitely often.
        Then $\tau$ is not fair by the induction hypothesis.
    \end{case}
    \begin{case}
        There exists $k^* \in \mathbb{N}$ such that, for all $k \geq k^*$ the reactions $\zeta_i$, for $0 \leq i < j$, are all disabled in $q_k$.
        Then $z_j$ does not increase at any step of $\tau$ from $k^*$ onward.
        Since every occurrence of $\zeta_j$ decreases $z_j$, this implies that $\zeta_j$ only occurs finitely many times after $k^*$, hence only finitely many times in $\tau$.
        Since $\zeta_j$ is enabled infinitely often along $\tau$ it follows by Observation 1 that $\tau$ is not fair. \qedhere
    \end{case}
    \end{proof}
\begin{lemma}
    \label{lem:chiomegaterminal}
    With any initialization, all fair trajectories of the chemical reaction network $\textbf{N}{\chi\omega}$, consisting of just the reactions $\chi$ and $\omega$, are terminal.
\end{lemma}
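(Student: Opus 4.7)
The plan is to invoke the fair termination lemma (Lemma~\ref{lem:fairterm}): it suffices to exhibit, from an arbitrary accessible state of $\mathbf{N}_{\chi\omega}$, some finite terminal trajectory. The key structural observation is that both reactions leave $z_n$ fixed (the species $Z_n$ does not appear in $\chi$ and is catalytic in $\omega$), and that both reactions preserve the total $b + r$ (each moves one unit between $B$ and $R$). In particular, $z_n$ retains its initial value throughout every trajectory, so the partition below on the value of $z_n$ is permanent.

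Given an arbitrary accessible state $q = (b, r, z_n)$, I would split on whether $z_n = 0$ or $z_n \geq 1$. If $z_n = 0$, then $\omega$ is permanently disabled from $q$ (since it requires $z_n \geq 1$), so only $\chi$ can fire; each occurrence of $\chi$ strictly decreases $b$, so after at most $b$ occurrences $\chi$ itself becomes disabled, yielding a terminal trajectory from $q$. If instead $z_n \geq 1$, then I would fire $\omega$ repeatedly while $r \geq 1$; each such firing strictly decreases $r$, so after at most $r$ firings we reach a state with $r = 0$, in which both $\chi$ (which requires $r \geq 1$) and $\omega$ (which also requires $r \geq 1$) are disabled, hence terminal.

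Since a finite terminal trajectory exists from every accessible state, Lemma~\ref{lem:fairterm} delivers the conclusion. There is really no substantial obstacle: the argument reduces to the invariance of $z_n$, together with the observation that for each possible value of $z_n$ one of the two reactions can be iterated to drive the system into a terminal configuration without ever being blocked by the other.
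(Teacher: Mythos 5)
Your proposal is correct and follows essentially the same route as the paper: both arguments reduce the problem to the fair termination lemma (Lemma~\ref{lem:fairterm}) by exhibiting, from each accessible state, a terminal trajectory obtained by repeatedly firing $\omega$ to drive $r$ to $0$ when $z_n>0$ (and firing $\chi$ to exhaust $b$ when $z_n=0$), using the invariance of $z_n$ and of the relevant population sum. The only cosmetic difference is that the paper disposes of the $z_n=0$ case by noting the full trajectory is unique and terminal, whereas you route that case through the fair termination lemma as well; both are fine.
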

\begin{proof}
    If $\textbf{N}{\chi\omega}$ is initialized with $z_n = 0$, then there is only one full trajectory, which is terminal, so it suffices to prove the lemma for initializations with $z_n > 0$.
    Let $q_0$ be any such initial state, and let $\hat{p}$ be the value of $b+r+z_n$ in $q_0$.
    Since $z_n$ and $\hat{p}$ are invariants of $\textbf{N}{\chi\omega}$, a state of $\textbf{N}{\chi\omega}$ is completely determined by the value of $r$.
    We thus refer to ``the state $r$'' of $\textbf{N}{\chi\omega}$, for $0 \leq r \leq \hat{p}-z_n$.
    Note that in this terminology the unique terminal state is 0.

    For each state $r$ of $\textbf{N}{\chi\omega}$ with initial state $q_0$ the trajectory $\tau_r = (r, r-1, \dots, 1, 0)$ given by $r$ consecutive occurrences of $\omega$ is a terminal trajectory from $r$, so the fair termination lemma (Lemma~\ref{lem:fairterm}) tells us that all fair trajectories of $\textbf{N}{\chi\omega}$ are terminal.
\end{proof}

Recall the notation defined just before Theorem~\ref{thm:1QS8}.  The following result is our main theorem.

\begin{theorem}
    \label{thm:pleq2pownterm}
    Let $(z_0,\dots,z_n,b,r)$ be the state of $\bfNt$ specified as follows.
    \begin{enumerate}[(i)]
        \item $z_{n-1}\cdots z_0$ is the $n$-bit binary expansion of $p \bfmod 2^n$.
        \item $z_n = \llbracket p \geq 2^n \rrbracket$.
        \item $b = (p - \varepsilon) * \llbracket p < 2^m \text{ or } p \geq 2^n \rrbracket$.
        \item $r = (p - \varepsilon) * \llbracket 2^m \leq p < 2^n \rrbracket$.
    \end{enumerate}
    If $p < 2^n$, then $\bfNt$ terminates in this state on all trajectories.
    If $p \geq 2^n$, then $\bfNt$ terminates in this state on all fair trajectories.
\end{theorem}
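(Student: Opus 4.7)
The plan is to split along the two regimes $p < 2^n$ and $p \geq 2^n$ that the theorem itself distinguishes, using the analysis already completed for $\mathbf{N}_1$ together with Lemmas~\ref{lem:lemma13}, \ref{lem:fairimpliesdisable}, and \ref{lem:chiomegaterminal} as the main building blocks.

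For the low-population case $p < 2^n$, the observation just before Lemma~\ref{lem:lemma12} already notes that Corollary~\ref{cor:1QS4} carries over to $\mathbf{N}_2$, so $z_n$ remains $0$ at every accessible state and the $\omega$-reaction is never enabled. Every trajectory of $\mathbf{N}_2$ is therefore a trajectory of $\mathbf{N}_1$ and conversely, and I would invoke Theorem~\ref{thm:1QS8} directly. A short unpacking of the indicator brackets then shows that for $p < 2^n$ the state of Theorem~\ref{thm:1QS8} coincides with (i)--(iv) of the present theorem: $\llbracket p \geq 2^n\rrbracket$ vanishes, $\llbracket p < 2^m \text{ or } p \geq 2^n\rrbracket$ collapses to $\llbracket p < 2^m\rrbracket$, and $\llbracket 2^m \leq p < 2^n\rrbracket$ collapses to $\llbracket p \geq 2^m\rrbracket$.

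For the high-population case $p \geq 2^n$, the real work is to show that every fair trajectory terminates; Lemma~\ref{lem:lemma13} will then identify the terminal state. Given a fair trajectory $\tau$, Lemma~\ref{lem:fairimpliesdisable} supplies an index $k^\star$ beyond which no $\zeta_j$ is ever enabled along $\tau$. The suffix $\tau'$ beginning at $k^\star$ uses only the reactions $\chi$ and $\omega$, so it is a trajectory of the sub-network $\mathbf{N}_{\chi\omega}$. I would argue that $\tau'$ inherits fairness from $\tau$, since the quantifier $(\exists^\infty i)$ ignores finite prefixes and the fairness requirements of $\mathbf{N}_{\chi\omega}$ involve only $\chi$ and $\omega$, both already covered by fairness of $\tau$ in $\mathbf{N}_2$. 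Lemma~\ref{lem:chiomegaterminal} then forces $\tau'$, and hence $\tau$, to be terminal. A final application of Lemma~\ref{lem:lemma13} together with matching of indicators under $p \geq 2^n$ (so that $z_n = 1$, $b = p - \varepsilon$, $r = 0$) completes the case.

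The main obstacle is the fairness-restriction step in the second case: one must justify that the tail of a fair $\mathbf{N}_2$-trajectory is a \emph{full} fair trajectory of the reduced network $\mathbf{N}_{\chi\omega}$ before Lemma~\ref{lem:chiomegaterminal} can be applied. The subtlety is that in this framework a finite fair trajectory must be terminal (finite trajectories qualify as full only when terminal), so one must check that terminality of $\tau'$ in $\mathbf{N}_{\chi\omega}$ really forces terminality of $\tau$ in $\mathbf{N}_2$. This holds because at the last state of $\tau'$ every $\zeta_j$ is disabled by the choice of $k^\star$, and $\chi,\omega$ are disabled there by terminality of $\tau'$ inside $\mathbf{N}_{\chi\omega}$. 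Once this bookkeeping is granted, the remainder is routine verification against Theorem~\ref{thm:1QS8} and Lemma~\ref{lem:lemma13}.
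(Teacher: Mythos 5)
Your proposal is correct and follows essentially the same route as the paper: the $p<2^n$ case reduces to Theorem~\ref{thm:1QS8} because $\omega$ is never enabled, and the $p\geq 2^n$ case combines Lemma~\ref{lem:fairimpliesdisable} with Lemma~\ref{lem:chiomegaterminal} to get fair termination and then applies Lemma~\ref{lem:lemma13}. The only difference is that you explicitly spell out the gluing step (that the tail of a fair $\mathbf{N}_2$-trajectory is a full fair trajectory of $\mathbf{N}_{\chi\omega}$ and that its terminality, together with the permanently disabled $\zeta$-reactions, forces terminality in $\mathbf{N}_2$), which the paper compresses into a single sentence.
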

\begin{proof}
    If $p < 2^n$, then Corollary~\ref{cor:1QS3} tells us that $z_n$ never becomes positive in $\bfNt$, so $\omega$ is never enabled.
    Hence, in this case $\bfNt$ behaves exactly like $\bfN$.
    Theorem~\ref{thm:1QS8} tells us that $\bfNt$ terminates on all trajectories to the state satisfying (i) and (ii) above and, since $\llbracket p < 2^m \rrbracket = \llbracket p < 2^m \text{ or } p \geq 2^n \rrbracket$ and $\llbracket p \geq 2^m \rrbracket = \llbracket 2^m \leq p < 2^n \rrbracket$, also satisfying (iii) and (iv) above.

    If $p \geq 2^n$, then Lemmas~\ref{lem:fairimpliesdisable} and \ref{lem:chiomegaterminal} together tell us that $\bfNt$ terminates on all fair trajectories.
    Since $\llbracket p \geq 2^n \rrbracket = 1,  \llbracket p < 2^m \text{ or } p \geq 2^n \rrbracket =1$, and $\llbracket 2^m \leq p < 2^n \rrbracket = 0$, Lemma~\ref{lem:lemma13} tells us that termination must occur in the state satisfying (i)-(iv) above.
\end{proof}

Since $\varepsilon$ is again negligible with respect to $p$, Theorem~\ref{thm:pleq2pownterm} says that $\bfNt$ terminates in an overwhelmingly blue state if $p < 2^m$ or $p \geq 2^n$ but in an overwhelmingly red state if $2^m \leq p < 2^n$.  Hence $\bfNt$ exhibits very sharp phase transitions at the population thresholds $2^m$ and $2^n$.  As noted in the Introduction and elaborated in Section~\ref{sec:Implications} below, this has significant implications for the verification of chemical reaction networks.
\section{Implications for Verification \label{sec:Implications}}

The coupled phase transitions in the chemical reaction network $\mathbf{N_2}$ make it difficult to verify its behavior.  In this section we describe the use and limitations of verifying the chemical reaction network using simulation, model checking and differential equations.   None of these methods detected that the system turned red when the population reached $2^m$.  We then describe how the use of an interactive theorem prover enabled us to verify the chemical reaction network's behavior at both phase transitions, i.e., that it turned from blue to red at $2^m$ and from red to blue at $2^n$.   The fact that theorem proving could verify behavior that was otherwise not verified for the  chemical reaction network suggests that interactive theorem proving may have a useful role to play in future verification of a class of chemical reaction networks. 
Recall that the chemical reaction networks $\mathbf{N_1}$ and $\mathbf{N_2}$ have fixed populations throughout any given execution, and that their initial states have $z_0$ as the entire population. 

\subsection {Simulation}

The MATLAB SimBiology package is widely used to explore the behavior of a number of devices (molecules) executing concurrently \cite{matlab2019}. Using SimBiology, simulations of the $\mathbf{N_2}$ chemical reaction network were performed on an Intel processor computer with a processor clock of 5.0 GHz and 64GB of RAM.
Several simulations were performed with increasing  populations $z_0$.
With a population of $10^{7}$, the simulation performed as expected.
However, with a population of $10^{8}$, the simulation failed and terminated with no output or error message.
Thus, 
the stochastic simulation was unable to detect that the behavior of the $\mathbf{N_2}$ chemical reaction network could experience a phase transition.

\subsection{Model Checking}

The chemical reaction network $\mathbf{N_2}$  simulated in SimBiology and described above also was verified using the PRISM 4.6 probabilistic model checker \cite{cKwNoPa11}.  
Kwiatkowska and Thachuk, among others, have described the use of PRISM for the probabilistic verification of chemical reaction networks for biological systems \cite{jKwiTha14}.

To verify the chemical reaction network behavior we first converted the $\mathbf{N_2}$   model to SBML using the export function in SimBiology, and then converted the SBML model to PRISM using the sbml2prism conversion tool supplied with the PRISM software.  PRISM was used to verify six key properties of the $\mathbf{N_2}$ chemical reaction network at multiple populations. For example, one of the properties stated that  ``$P>=1[\text{F G } r = 0]$'', i.e., that with probability 1, the eventual state of the $R$ species has $0$ molecules, and never changes from that.
With a population of 100, PRISM generated the CTMC state model in 1.65 seconds using the same processor and memory as for the SimBiology simulations, and
the verification of the six properties required less than 2 seconds of CPU time.

For a population of 100 molecules, 97 correctly turn blue and 0 correctly turn red, since the latter only happens when the initial population is larger than $2^{34}$. PRISM also verified that in the final state the species count was $z_0 = 0$, $z_1 = 0$, $z_2 = 1$ and $z_5 = 1$, based on the binary expansion of one hundred.

However, we were unable to model check $\mathbf{N_2}$ with a population of 400
due to the rapid increase in states and limited memory. Thus, model checking confirmed the expected behavior of the $\mathbf{N_2}$ chemical reaction network for a population of $100$ but could not detect the behavioral change to red when the population increased. 

Advanced methods to prune a model so that meaningful model checking can occur include symmetry reduction \cite{cHKNPT06},  statistical model checking \cite{jCaKwWh18}, and automated partial exploration of the model \cite{jPaBrUc16}.  Recent work by Cauchi, et al. using formal synthesis allowed verification of systems with 10 continuous variables \cite{cCLLAKC19}.
However, even these methods would not be likely to help with the exceedingly large number of states  when the number of molecules is scaled to a realistic value for experiments. 
\color{black}

\subsection{Differential Equations}

We have seen how model checking and simulation fail to detect the ``red'' behavior in our chemical reaction network $\mathbf{N_2}$ due to the processing time and memory required for a large population.  The red behavior also is not detected when $\mathbf{N_2}$  
is approximated by deterministic semantics, and in this case it is not due to computational power.  In this model, a chemical reaction network is represented by a system of polynomial autonomous differential equations.   A mathematical connection between the stochastic model and the deterministic model is given by Kurtz \cite{jKurt72}, where he shows that the stochastic model is equivalent to the deterministic model as the number of molecules in the system grows towards infinity.   

In general, the system of differential equations induced by a chemical reaction network is difficult or impossible to solve exactly, and numerical methods are often used to approximate solutions. Here, we utilized MATLAB and the SimBiology package \cite{matlab2019} to numerically integrate the system of differential equations for $\mathbf{N_2}$.  The result found that $\mathbf{N_2}$  terminated in a predominantly blue state, again missing the red behavior. 

\subsection{Theorem Proving}
\label{sec:TheoremProving}

The simulation, model checking, and differential equations approaches to chemical reaction network verification outlined above all make some simplifying assumptions: reduced state space or generalization to the continuum.
In the case of our chemical reaction network, these assumptions lead to an incorrect verification result.

Interactive theorem proving, however, offers an exact approach that is guaranteed to apply at every scale.
In the interactive theorem proving paradigm, users create a machine-checkable mathematical proof of verification properties in collaboration with a software system.
Model checking also constructs a mathematical proof of correctness, but it relies more on a complete or semi-complete search of the state space in question.
By contrast, the goal of interactive theorem proving is to construct a more traditional mathematical proof that is also machine-checkable.
The result then applies to any population scale; a mathematical proof parameterized by population $N$ is valid at every possible value of $N$.

In a typical interactive theorem proving session, a user starts with a base of trusted facts generated from axioms and assumptions, and uses well-understood rules like modus ponens and double negation removal to construct new trusted facts and lemmas.
As with a conventional mathematical proof, the user's goal is to add new trusted facts in a strategic way until reaching the goal of the proof.

We have verified our chemical reaction network with Isabelle/HOL \cite{oNipKle14,oNiPaWe02}, a popular interactive theorem prover with several useful proof automation features.
Instead of working at the level of rules like modus ponens, users can instruct Isabelle to execute more general proof methods that can apply sequences of basic rules without user direct input.
For example, Isabelle can often prove the equivalence of predicate logic formulas with only one user-generated method invocation.
Once invoked, such a method attempts to automatically construct a series of low-level logical rules whose application proves the equivalence.
An Isabelle proof, then, consists of a directed acyclic graph of facts, connected by applications of these methods.
The user's task is to choose a chain of intermediate goal facts in a way that allows Isabelle to connect them easily on the way to the overall goal.

Isabelle also provides the powerful Sledgehammer automation tool, which makes calls to external proof systems to automate aspects of proof creation.  Sledgehammer takes a goal fact as input and attempts to generate a method invocation that proves it, operating at one level of abstraction above the proof methods invocations discussed above.
Since it is often unclear which method to invoke (or which arguments to supply to it), this functionality can increase proof construction speed substantially.

\begin{figure}
	\begin{center}
		\framebox{\includegraphics[width=3.0in,bb=0 0 645 615]{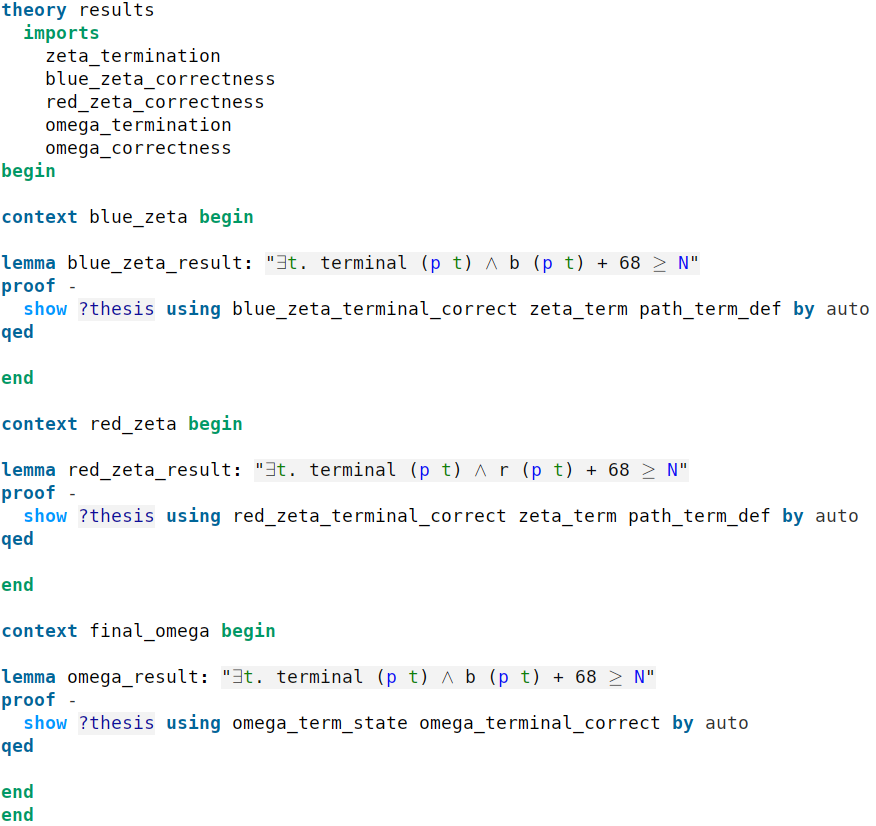}}
		\caption{\label{fig:isabelleresult}
		The end of the Isabelle proof, which summarizes its results in three lemmas.  The \texttt{context} statements bring our assumptions about the value of $N$ into context.  The \texttt{using} statements bring in trusted facts from the rest of our proof and supply them as arguments to Isabelle's \texttt{auto} proof method.  The identifier \texttt{p} refers to an arbitrary trajectory that is part of each context.  Isabelle displays all statements with a white or light gray background to indicate that it has checked them completely, and they are valid.}
	\end{center}
\end{figure}

We have used Isabelle to verify that our chemical reaction network has the desired behavior for all possible initializations.
That is, if we initialize it with $N<2^{34}$ or $N \ge 2^{67}$, the chemical reaction network terminates with majority blue, but if we initialize it with $2^{34} \le N < 2^{67}$, it terminates with majority red.
As expected, theorem proving is able to verify behavior correctly in all regions, including the middle region that is inaccessible to model checking, simulation, and ODE methods.
Figure~\ref{fig:isabelleresult} shows an image taken from the end of our Isabelle proof; it contains the three goal facts that we successfully verified, which summarize the behavior of the chemical reaction network.

Our Isabelle proof is loosely based on the proofs presented in Sections~\ref{sec:SinglePhaseTransition} and \ref{sec:CoupledPhaseTransitions}.
Whereas those proofs define two chemical reaction networks $\mathbf{N_1}$ and $\mathbf{N_2}$, we use Isabelle's \textit{locale} feature to associate assumptions about the population of $N$ with various parts of our proof.
In the locale where $N < 2^{35} $, for example, we are able to prove that our chemical reaction network terminates with majority blue.
Figure~\ref{fig:isabelleresult} shows how we enter these locales at the end of the proof to bring together our final results.

We refer to the three final locales as the lower blue region, the middle red region, and the upper blue region.
For each region, our proof must show both termination and correctness; i.e., we must show that our chemical reaction network reaches a final state where no reactions are possible, and that any possible final state has the specified red or blue population.
\begin{figure}
	\begin{center}
		\framebox{\includegraphics[height=3.0in]{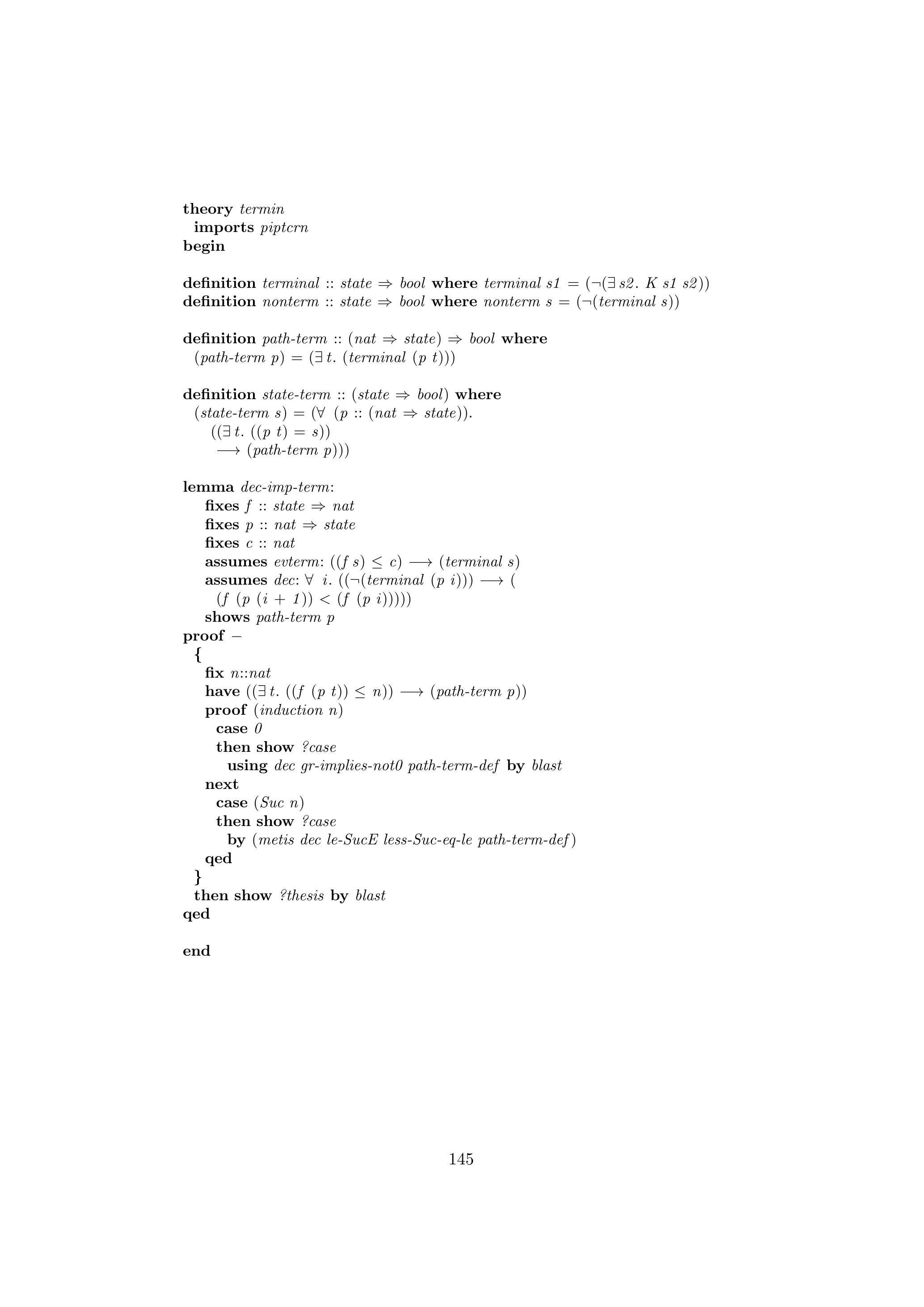}}
		\caption{\label{fig:isabellecode} This Isabelle code defines a terminal state as a state with no outgoing reactions; $K$ is a relation that encodes which state transitions our reaction set allows.  We also show a sample lemma that helps prove termination: if we identify a countdown expression $f$ and a constant $C$ such that all states with $f < C$ are terminal, then our system is guaranteed to terminate.}
	\end{center}
\end{figure}

As in Lemma~\ref{lem:1QS1}, we show termination in the lower two regions via a ``countdown'' expression that is guaranteed to decrease with every reaction.
See Figure~\ref{fig:isabellecode} for our Isabelle definitions of termination and a general lemma we proved that allows us to use the countdown technique.
In the upper blue region, it is impossible to prove termination without assuming that executions are fair.
Our Isabelle proof includes Equation~\ref{eqn:fairness} as an unproven assumption; we are not interested in unfair trajectories, but since they exist we cannot prove that all trajectories are fair.
For convenience, we also include Observation~\ref{obs:reactionfairness} as an an assumption. 
These two fairness assumptions allow us to prove that our chemical reaction network terminates in the upper blue region as well.

Our correctness proofs rely heavily on the sum
$S_{68}=\sum_{i=0}^{67} 2^i z_i$, using the notation of Section~\ref{sec:SinglePhaseTransition}, which is an invariant in the lower two regions.
In the upper blue region, it is an invariant until at least one $Z_{67}$ is produced.
This invariant allows us to reason about the composition of terminal states.
In the lower blue region, for example, we know that no red can ever be produced; the chemical reaction network can only produce its first red molecule alongside $Z$ species that would make the invariant too large.
Following the proof of Theorem~\ref{thm:1QS8}, then, we prove that any terminal state must be majority blue.

\section{Conclusion}
Taken together, the near-ubiquity of phase transitions in nature \cite{cRanKap10, jCaMiRa18}, the sheer size of molecular populations, and the simplicity of the chemical reaction networks that we have shown to exhibit population-induced phase transitions, indicate that molecular programming will present us with many exceptions to the otherwise useful notion that most bugs can be demonstrated with small counterexamples.  As we have seen, this presents a significant challenge to the verification of chemical reaction networks.  Here we suggest some directions of current and future research that might help meet this challenge.

A great deal of creative work has produced a steady scaling up of model checking to larger and larger state spaces \cite{jClEmSi09, jCDKB18, oAbSiTa18, jBCKL19, cLomPir19, cCJJKVZ19}.  Perhaps the most hopeful approach for dealing with population-induced phase changes, or with more general population-sensitive behaviors, is the model checking of parametrized systems \cite{oAbSiTa18}. 

Our results clearly demonstrate the advantage of including theorem proving (by humans and by software) in the verification toolbox for chemical reaction networks and other molecular programming languages.  This in turn suggests that software proof assistants such as Isabelle \cite{oNiPaWe02,oNipKle14} be augmented with features to deal more directly with chemical reaction networks and with population-sensitive phenomena.  It would also be useful to know how much of such work could be carried out with automated theorem provers such as Vampire \cite{cKovVor13}. 

Some future programmed molecular applications will be safety-critical, such as in health diagnostics and therapeutics. It is likely that evidence that such systems behave as intended will be required for certification by regulators prior to deployment.  Toward providing such evidence, Nemouchi et al. have recently shown how a descriptive language for safety cases can be incorporated into Isabelle in order to formalize argument-based safety assurance cases \cite{cNFGK19}.

We conclude with a more focused, theoretical question.  Our chemical reaction network $\mathbf{N}_1$ exhibits its phase transition on all trajectories, while $\mathbf{N}_2$ exhibits its coupled phase transitions only on all fair trajectories.  Is there a chemical reaction network that achieves $\mathbf{N}_2$'s coupled phase transitions on {\it all} trajectories?

\section*{Acknowledgments} We thank Neil Lutz for technical assistance.  The second and third authors thank Erik Winfree for his hospitality while they did part of this work during a 2020 sabbatical visit at the California Institute of Technology.  

\bibliographystyle{plain}
\bibliography{master}

\newpage
\appendix
\section{Proof of Fair Termination Lemma}

\begin{lemma}[fair termination lemma] 
If a population protocol with a specified initial state has a terminal trajectory from every accessible state, then all its fair trajectories are terminal.
\end{lemma}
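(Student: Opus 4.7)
The plan is a short proof by contradiction exploiting the finiteness of the state space of a population protocol. Suppose, toward a contradiction, that $\tau = (q_i \mid 0 \le i < \infty)$ is a fair trajectory that is \emph{not} terminal; then $\tau$ must be infinite. Because $\mathbf{N}$ is a population protocol with some fixed total population $p$, every state of $\mathbf{N}$ lies in the finite simplex $\Delta^{s-1}(p)$. By the pigeonhole principle, some state $q$ of $\mathbf{N}$ satisfies $(\exists^\infty i)\, q_i = q$ along $\tau$.

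Since each $q_i$ is reachable from the specified initial state, so is $q$, and accessibility is preserved along $\tau$. By the hypothesis of the lemma, there is a terminal trajectory $\sigma = (q'_0, q'_1, \ldots, q'_\ell)$ of $\mathbf{N}$ with $q'_0 = q$ and $q'_\ell$ a terminal state. The heart of the argument is an induction on $i = 0, 1, \ldots, \ell$ showing that $(\exists^\infty j)\, q_j = q'_i$. The base case $i = 0$ is our choice of $q$. For the induction step, let $\rho_i$ be the reaction of $\mathbf{N}$ that $\sigma$ uses to go from $q'_i$ to $q'_{i+1}$. Since $\rho_i$ is enabled in $q'_i$ and, by the induction hypothesis, $q'_i$ recurs infinitely often in $\tau$, fairness (equation~(\ref{eqn:fairness})) forces $\rho_i$ to occur at infinitely many positions $j$ with $q_j = q'_i$. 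At each such $j$, the successor is $q_{j+1} = q'_{i+1}$, so $q'_{i+1}$ recurs infinitely often in $\tau$ as well.

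Taking $i = \ell$ in the induction, the terminal state $q'_\ell$ appears at some position $j$ of $\tau$. But $\tau$ is infinite, so $q_{j+1}$ exists, which requires some reaction to be enabled in $q'_\ell$, contradicting the terminality of $q'_\ell$. This contradicts the assumption that $\tau$ is non-terminal and completes the proof. I expect no real obstacle here; the only step that needs care is the appeal to fairness in the induction step, where one must invoke the induction hypothesis $(\exists^\infty j)\, q_j = q'_i$ precisely to activate the premise of (\ref{eqn:fairness}) for the reaction $\rho_i$ at state $q'_i$.
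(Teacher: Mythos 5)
Your proof is correct. The setup is identical to the paper's: both use the finiteness of a population protocol's state space to extract, by pigeonhole, a state $q$ that recurs infinitely often along a hypothetical infinite trajectory $\tau$, and both then invoke the hypothesized terminal trajectory starting at $q$. Where you diverge is in how the contradiction is extracted. You use fairness \emph{positively} and repeatedly: an induction along the terminal trajectory $\sigma = (q'_0,\ldots,q'_\ell)$ shows that every state $q'_i$, including the terminal state $q'_\ell$, recurs infinitely often in $\tau$, which is absurd since no reaction is enabled in $q'_\ell$ yet $\tau$ must continue. The paper instead argues the contrapositive and uses the fairness condition only once: since $I_{q'_0}$ is infinite and $I_{q'_\ell}$ is empty, there is a first index $k$ where $I_{q'_k}$ is infinite but $I_{q'_{k+1}}$ is finite, and the reaction taking $q'_k$ to $q'_{k+1}$ is then a single explicit witness that $\tau$ violates fairness. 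The two arguments are logical duals; yours requires $\ell$ appeals to the fairness property but reads as a natural forward propagation, while the paper's locates one break point and exhibits a concrete unfair pair $(q^{**},\rho)$, which is marginally more economical. Your one delicate step---invoking the induction hypothesis to activate the premise of the fairness implication for $\rho_i$ at $q'_i$, and then using the determinism of a reaction's effect to conclude $q_{j+1}=q'_{i+1}$---is handled correctly.
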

\begin{proof}
Let $\mathbf{N}$ be a population protocol with initial state $q_0$, and assume that $\mathbf{N}$ has a terminal trajectory from every accessible state.  Let $\tau = (q_i\ |\ 0 \le i < \infty)$ be an infinite trajectory of $\mathbf{N}$.  It suffices to show that $\tau$ is not fair.

For each state $q$ of $\mathbf{N}$, let
\begin{equation}
    I_q = \{i \in \mathbb{N}\ |\ q_i = q\}.
\end{equation}
Since $\mathbf{N}$ is a population protocol, it has finitely many accessible states, so there is a state $q^*$ of $\mathbf{N}$ such that the set $I_{q^*}$ is infinite.  This state $q^*$ is accessible, so our assumption tells us that there is a finite trajectory $\tau^* = (q_i^*\ |\ 0 \le i < \ell)$ of $\mathbf{N}$ such that $q_0^* = q^*$ and $q^*_{\ell-1}$ is terminal.

Now $I_{q_0^*} = I_{q^*}$ is infinite and $I_{q_{\ell-1}^*} = \emptyset$ (because $q_{\ell-1}^*$ is terminal, so it does not appear in the infinite trajectory $\tau$), so there exists $0 \le k < \ell - 1$ such that $I_{q_k^*}$ is infinite and $I_{q_{k+1}^*}$ is finite.  Let $q^{**} = q_k^*$, and let $\rho$ be the reaction that takes $q_k^*$ to $q_{k+1}^*$.  Then $\rho$ is enabled in $q^{**}$ and there exist infinitely many $i$ such that $q_i = q^{**}$ (because $I_{q^{**}}$ is infinite), but there are only finitely many $j$ for which $q_j = q^*$ and $\rho$ occurs at $j$ in $\tau$ (because $I_{q_{k+1}^*}$ is finite).  Hence $\tau$ is not fair.
\end{proof}

\end{document}